\newtheorem{thm}{Theorem}
\newtheorem{prop}[thm]{Proposition} 
\newtheorem{defn}[thm]{Definition}
\date{}
\begin{document}
\setlength{\baselineskip}{16pt}
\title{Indirect  Influences, Links Ranking, and Deconstruction of Networks}
\author{Jorge Catumba,  \  Rafael D\'iaz, \  Ang\'elica Vargas}
\maketitle

\begin{abstract}
The PWP map was introduced  by the second author as a tool for ranking nodes in  networks. In this work we extend this
technique so that it can be used to rank links as well. Applying the Girvan-Newman algorithm a
ranking method on links induces a deconstruction  method for networks, therefore we obtain new methods for finding
clustering and core-periphery structures on networks.
\end{abstract}

\section{Introduction}

Three problems stand out for their centrality in the theory of complex networks, namely,
hierarchization, clustering, and core-periphery. In hierarchization the aim is to find a ranking on the nodes
of a network reflecting the importance of each node.  Several methods have been proposed for such rankings, among
them  degree centrality, eigenvalue centrality, closeness centrality,
betweenness centrality \cite{new}, Katz index \cite{k},  MICMAC of Godet \cite{godet},
PageRank of Google \cite{b2,  meyer}, Heat Kernel of Chung \cite{chung, chungyau}, and
Communicability of Estrada and Hatano \cite{estrada}. We  are going to use in this work the PWP method
\cite{diaz} which we review in Section \ref{pwpwwny}; for comparison with other methods see
\cite{diaz, da}, and for applications and extensions see \cite{cd, cd2, diazgomez, da}. Clustering
consists in finding a suitable partition on the nodes of a network such that
nodes within blocks are highly connected, and nodes in different blocks are weakly connected.
The reviews \cite{fo, f, s} offer a fairly comprehensive picture of the many methods that have been proposed to attack this problem.
Maximization of Newman's   modularity function  and its extensions \cite{b, di, girvannewman, n} is a particularly interesting
approach since it proposes a mathematical principle instead of an algorithm.
Roughly speaking the core-periphery finding problem \cite{r, v, z} consists in peeling a network as
if it were an onion,  discovering the rings out which it is built. The inner rings form the core
of the network, the outer rings form its periphery.

In this work we argue that, within a certain framework, the three problems have a common root:
a  hierarchization method induces both a clustering finding method and a core-periphery finding method. Indeed, we provide
three alternative methods for  reducing clustering and core-periphery finding to hierarchization:
the first one via the dual network; the second one via the barycentric division network;  the third one
regards a link as a bridge, thus its importance
is proportional to its functionality and to the importance of the lands it joints.

We  work with double weighted directed  networks, i.e. weights defined both on links and on nodes, formally
introduced in Section \ref{pwpwwny} where we
generalize the PWP method so that it can be applied to double weighted networks.
In Section \ref{dww} we recall how a  ranking on links
induces, following the  Girvan-Newman algorithm, a network deconstruction method.
In Section \ref{dww4} we introduce the dual construction for double weighted networks and use it
to rank links, obtaining the corresponding clustering and core-periphery finding methods. In Section \ref{dwwn5}
we introduce the barycentric division construction for double weighted networks and  consider the corresponding
clustering and core-periphery finding methods. In Section \ref{dwwn6} we introduce the bridge approach to link ranking and
its corresponding clustering and core-periphery finding methods.
In Section \ref{s7} we illustrate the notions introduced along the paper by applying them
to a highly symmetric intuitively graspable network,  and also to
a more sophisticated network.

\section{PWP on Double Weighted Networks}\label{pwpwwny}

\begin{figure}[t]
\centering
\begin{tabular}{@{}cc@{}}
\includegraphics[width=6cm, height=3cm]{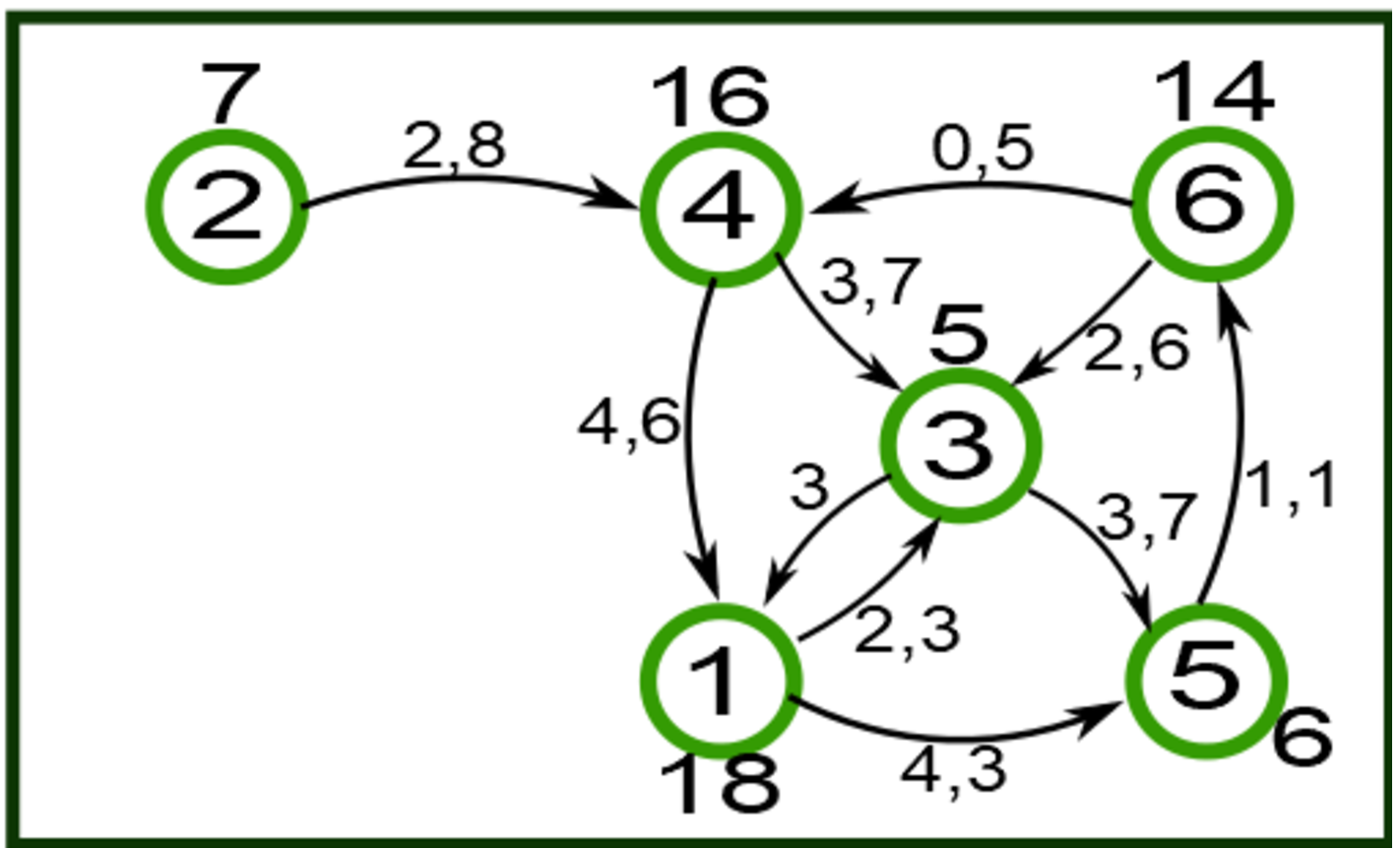}
\
\
\
\
\
\
\
\
\
\
\
\
\includegraphics[width=6cm, height=3cm]{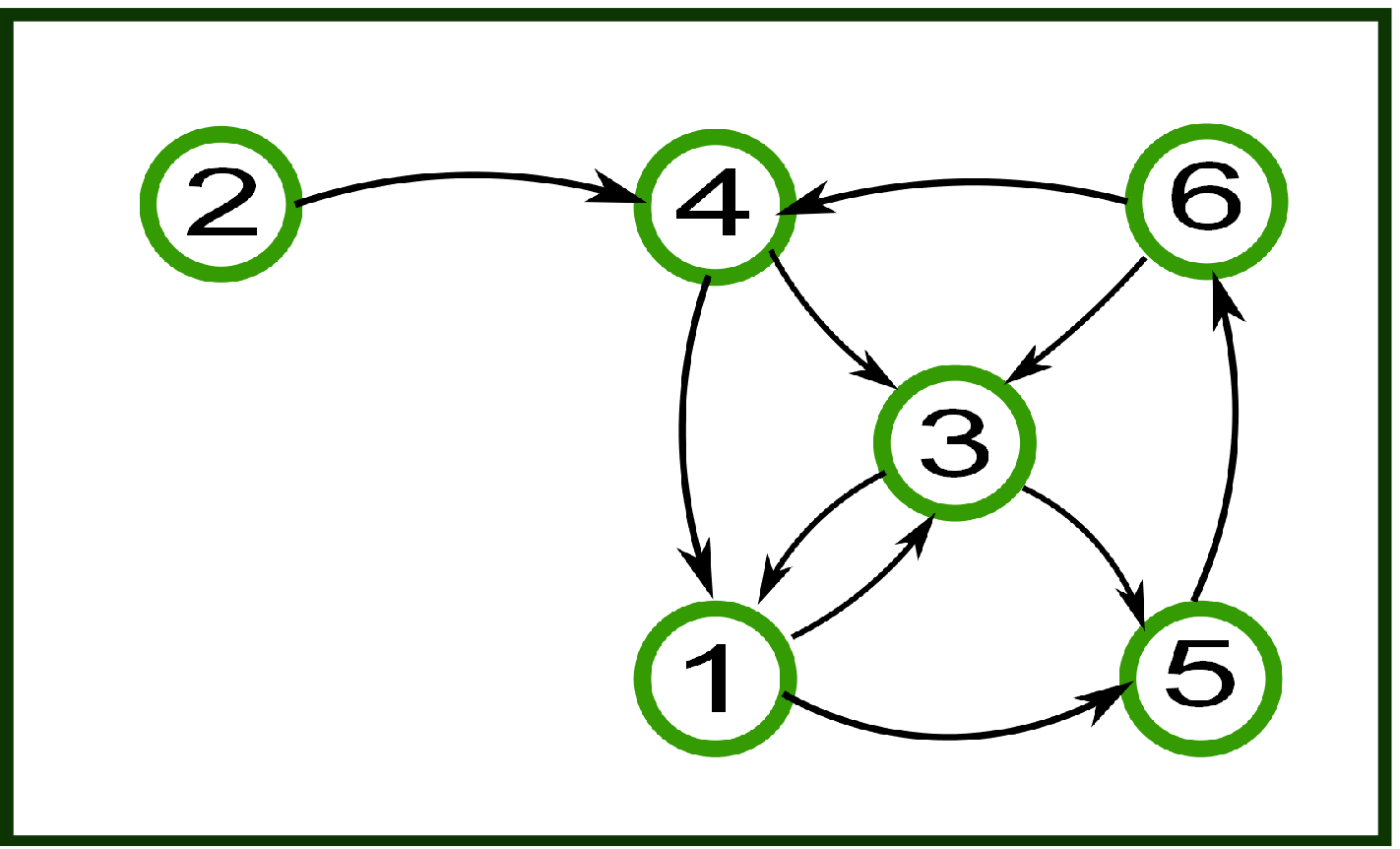}
\end{tabular}
\caption{Left: Weighted network $W.\  $ Right: Weighted network $Y.$}
\label{f1}
\end{figure}

Let $\mathrm{digraph} $ be the category of directed networks, $\mathrm{wdigraph} $ the category of directed networks with
weighted links, and  $\mathrm{wwdigraph} $ the category of directed networks with weighted nodes and weighted links,
i.e. objects  in  $\mathrm{wwdigraph} $ are tuples $(V, E,s,t, w, f) $ consisting of:
$-$A directed network $(V, E,s,t) $  with set of nodes $ V, $  set of links
$ E, $ and $ (s,t):E \rightarrow V\times V $ the source-target map. $-$A  map $ f: V \rightarrow \mathbb{R} $ giving
weight to nodes.
$-$A map $w:E \rightarrow \mathbb{R}$ giving weight to links. Figure \ref{f1} shows on the left the
 double weighted network $ W,  $
and on the right the double weighted network $ Y  $ with the same underlying network  and weights set to $ 1.$
 A morphism $ (\alpha, \beta):(V_1, E_1,s_1,t_1, w_1, f_1)
\rightarrow   (V_2, E_2,s_2,t_2,w_2, f_2)  $
in $ \mathrm{wwdigraph}$  is given by a pair of maps  $ \alpha: V_1 \rightarrow  V_2 $ and
 $ \beta: E_1 \rightarrow E_2 $ such that
$$(s_2,t_2)\circ \beta  = (\alpha \times \alpha)\circ (s_1,t_1), \ \ \ \ \ f_2(v_2)  =
 \sum_{v_1 \in V_1,\ \alpha v_1=v_2}f_1(v_1), \ \ \ \ \  w_2(e_2)  =  \sum_{e_1 \in E_1,\ \beta e_1=e_2}w_1(e_1).$$
Figure \ref{mor} displays a morphism in  $\mathrm{wwdigraph} ,$ represented by thick arrows, with $W$ as domain.
To each double weighted network on $[n]  =  \{1,...,n \} $ we associate a matrix-vector pair
$(D,f) \in \mathrm{M}_n(\mathbb{R}) \times \mathbb{R}^n  $  consisting of the adjacency matrix $D$ and the
vector $f$ of weights on nodes:
$$D_{ij} =  \sum_{e\in E,  \ t(e)=i,  \ s(e)=j}w(e) \ \ \ \ \  \mbox{and} \ \ \ \ \ f_j =  \mbox{weigth of} \ j.$$

\begin{figure}[t]
\centering
\includegraphics[width=14cm, height=5cm]{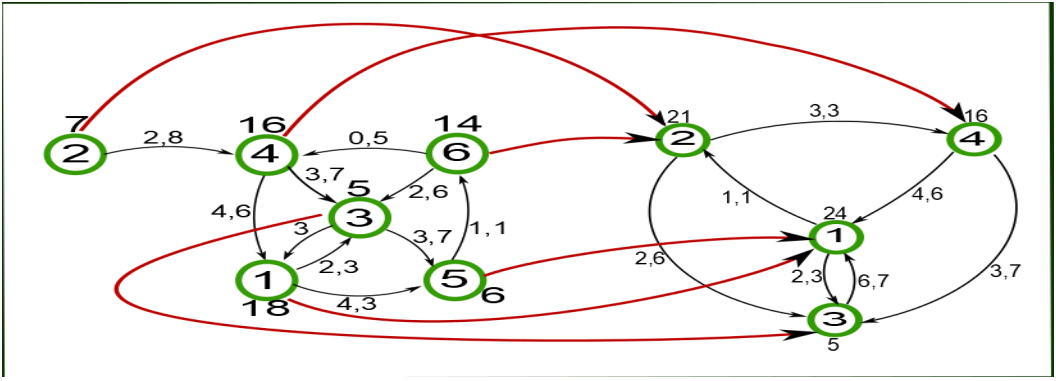}
\caption{Morphism between double weighted directed graphs.}
\label{mor}
\end{figure}
A weighted network   without multiple links on $[n] $ and the pair $(D,f)  $ encode, essentially,
the same information. For simplicity we usually work with networks without multiples links.
Morphisms between weighted networks without multiples links are defined for matrix-vector pairs,
say from  $ (D,f) \in \mathrm{M}_n(\mathbb{R}) \times \mathbb{R}^n  \  $ to
  $\ (E,g) \in \mathrm{M}_m(\mathbb{R}) \times \mathbb{R}^m  $ by a map $\alpha:[n] \rightarrow [m] $ such that
$E_{ij} = \sum_{\alpha(k)=i, \alpha(l)=j}D_{kl} $ and  $g_j  =  \sum_{\alpha(l)=j}f_l.$
We have maps $ \mathrm{digraph}    \rightarrow  \mathrm{wdigraph}   \rightarrow   \mathrm{wwdigraph}  $
where the first map gives weight $1 $ to links, and the second map keeps the weight on links and gives
weight $ 1 $ to nodes. The product
$ (V_1 \times V_2, E_1\times E_2 ,(s_1,s_2),(t_1,t_2), w_1\times w_2, f_1\times f_2)$ of double
weighted networks  $(V_1, E_1,s_1,t_1, w_1, f_1) $ and   $  (V_2, E_2,s_2,t_2,w_2, f_2) $
is such that $w_1\times w_2(e_1,e_2) =   w_1(e_1)w_2(e_2)  $  and
$  f_1\times f_2(v_1,v_2)  =   f_1(v_1)f_2(v_2). $ Similarly,  disjoint union of
double weighted networks can be defined.

  \begin{figure}[t]
\centering
\begin{tabular}{@{}cc@{}}
\includegraphics[width=7cm, height=4cm]{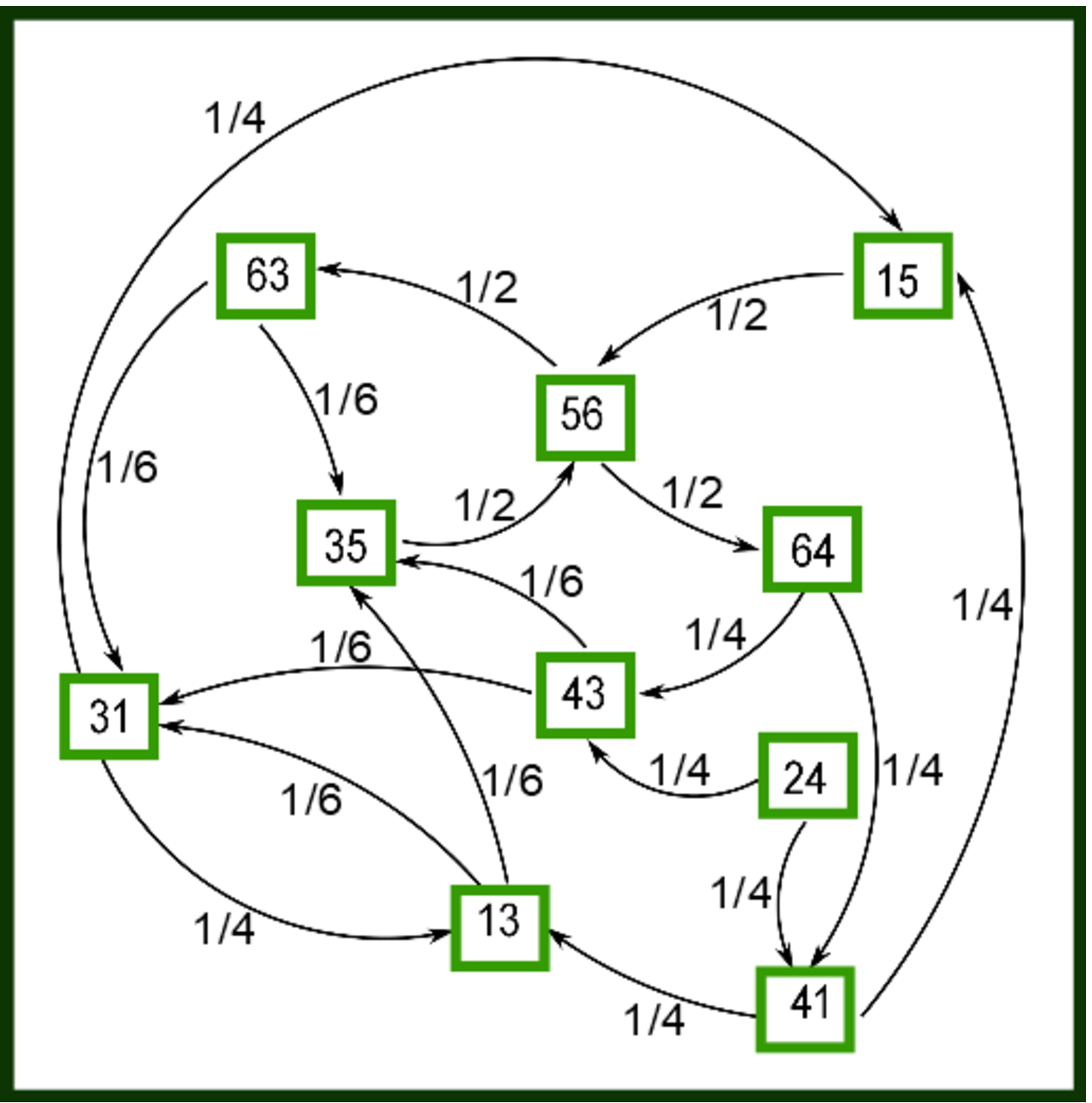}
\
\
\
\
\
\
\
\
\
\includegraphics[width=7cm, height=4cm]{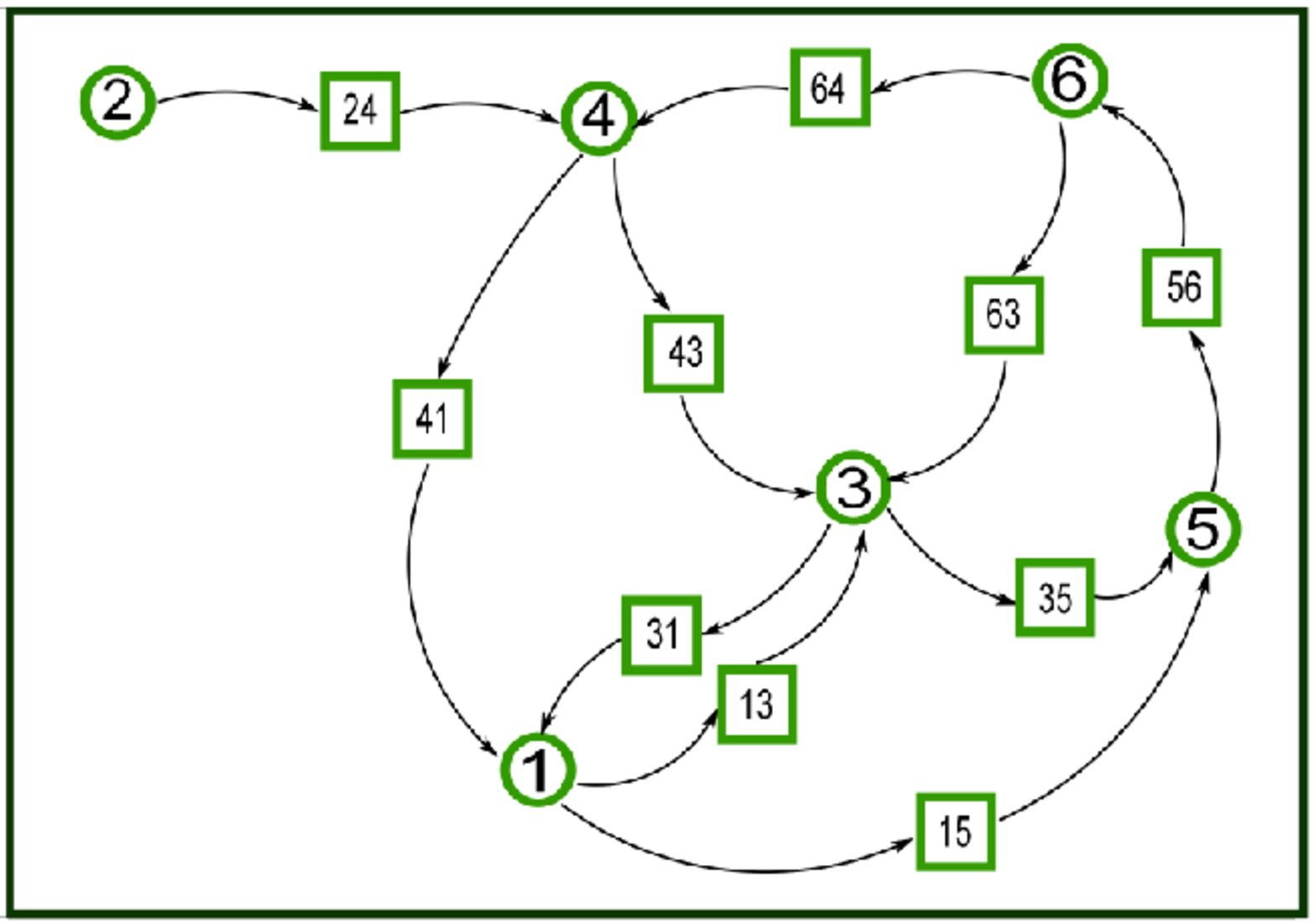}
\end{tabular}
\caption{Left: Dual Network $Y^{\star}$. \  Right: Barycentric  Network $Y^{\circ}$.}
\label{jjj}
\end{figure}

The PWP  map depends on a parameter $\lambda \in \mathbb{R}_{\geq 0}$ and is useful for measuring
 \textit{indirect} influences on networks. Assume as  given a weighted directed network
 with associated matrix $ D \in \mathrm{M}_n(\mathbb{R})$ measuring the \textit{direct}
 influence that each node exerts on the other nodes. The PWP map
$T:\mathrm{M}_{n}(\mathbb{R}) \rightarrow  \mathrm{M}_{n}(\mathbb{R}) $
sends  $ D  $  to the matrix of indirect influences  $T=T(D, \lambda)$ given by:
$$T(D, \lambda)\  = \    \frac{e^{\lambda D}  -  I}{e^{\lambda}  -  1}
\  =  \  \frac{\sum_{k=1}^{\infty}D^k\frac{\lambda^k}{k!}}{\sum_{k=1}^{\infty}\frac{\lambda^k}{k!}}  \ =
\ \sum_{k=1}^{\infty}\frac{\lambda^k}{e^{\lambda}  -  1}\frac{D^k}{k!}
\ = $$
$$\sum_{k=1}^{\infty}\bigg(\sum_{j=0}^{\infty}B_j\frac{\lambda^{j+k-1}}{j!} \bigg)\frac{D^k}{k!} \ =
\ \sum_{k=0}^{\infty}\bigg(\sum_{l=0}^{k}(k)_lB_{k-l}\frac{D^{l+1}}{(l+1)!}\bigg)\frac{\lambda^k}{k!} \ = $$
$$D \ + \ \frac{D^2}{2}\lambda\ + \ \bigg(\frac{D}{6}+
\frac{D^3}{3}\bigg) \frac{\lambda^2}{2!} \ + \ \bigg(\frac{D^2}{4} + \frac{D^4}{4}\bigg) \frac{\lambda^3}{3!}\
+ \ \bigg(-\frac{D}{30}+\frac{D^3}{3} + \frac{D^5}{5}\bigg)\frac{\lambda^4}{4!}\ + \cdots $$ where
$ (k)_l=  \frac{k!}{(k-l)!}$ and  $ B_j \in \mathbb{Q}$ are the Bernoulli numbers. From the above expression
we see that $ T(D, \lambda) $ is a one-parameter deformation of the adjacency matrix $ D  $
in the sense that  $  T(D,0)=D; $ replacing $ D $ by $ T(D, \lambda) $ one obtains
a one-parameter deformation of all network concepts defined in terms of the matrix $D$
of direct influences.  As an example we introduce  an one-parameter deformation
of the Girvan-Newman modularity function that takes indirect influences into account.
The Girvan-Newman modularity function $ Q: \mathrm{Par}[n] \rightarrow \mathbb{R}, $ defined on the set of
partitions on the nodes of a (non-negative) weighted directed network $ D,  $ is given  by
$$Q(\pi) = \sum_{ i\sim j}\bigg( \frac{D_{ij}}{m} -
\frac{d_i^{\mathrm{in}}}{m}\frac{d_j^{\mathrm{out}}}{m}\bigg)$$
where the sum is over pair of nodes in the same block of $\pi$, and
$$m=\sum_{i,j\in [n]}D_{ij}>0, \ \ \ \ \ d_i^{\mathrm{in}}=\sum_{j\in [n]}D_{ij}, \ \ \ \ \
d_j^{\mathrm{out}}=\sum_{i\in [n]}D_{ij}.$$
Turning on indirect influences we obtain the one-parameter deformation of the modularity function
$ Q_{\lambda}: \mathrm{Par}[n] \rightarrow \mathbb{R} \ $ $(\lambda \in \mathbb{R}_{\geq 0}, \ Q_0=Q) \ $ given by
$$Q_{\lambda}(\pi) = \sum_{ i\sim j}\bigg( \frac{T_{ij}(\lambda)}{M(\lambda)} -
\frac{E_i(\lambda)}{M(\lambda)}\frac{F_j(\lambda)}{M(\lambda)}\bigg) \ \ \  \mbox{where} \ \ \
M(\lambda)=\sum_{i,j\in [n]}T_{ij}(\lambda)>0.$$
In several examples, including the networks study in the closing sections and quite a few randomly generated networks, $ Q_{\lambda}$ is a monotonically
decreasing function of $\lambda$, a result intuitively appealing since turning on indirect influences
makes networks "more connected."

We use  the matrix  of indirect influences $T$ to impose rankings on nodes of  networks.
Let $\mathrm{ranking}(X)$ be the set of rankings on $X,$ i.e. a pre-order $ \leq $
on $X$ for which there is a map
$ f: X \rightarrow \mathbb{N}$ such that $i \leq j$ if and only if $f(i)\leq f(j).$
Equivalently, a ranking  is given by a partition on $ X  $ together with a linear ordering
on the blocks of the partition, thus the exponential generating series for rankings is
$$ \sum_{n=1}^{\infty}\big| \mathrm{ranking}[n]  \big|\frac{x^n}{n!}  =   \frac{x}{1-x} \circ (e^x -1) =
\frac{e^x-1}{2-e^x}. $$ Rankings on $[n]$ are isomorphic if there is a bijection $\alpha:[n] \rightarrow [n] $
that preserves pre-orders. The number of isomorphism classes of rankings on  $ [n] $ is equal to the number of
compositions on  $n $ so, see \cite{andrews}, there are $2^{n-1} $ non-isomorphic rankings.

In our previous works \cite{diaz, diazgomez, da} we  used the maps
$ \mathrm{E},  \mathrm{F}, \mathrm{I}: \mathrm{M}_n(\mathbb{R})   \rightarrow   \mathrm{ranking}[n] $
where the rank of a node in the respective pre-orders is given, setting $T=T(D, \lambda),$ by
$$\mathrm{E}_i =  \sum_{j=1}^n T_{ij}, \ \ \ \ \ \  \mathrm{F}_i  =  \sum_{j=1}^n T_{ji}, \ \ \ \ \  \mbox{and} \ \ \ \ \
  \mathrm{I}_i  =  \sum_{j=1}^n (T_{ij}  +  T_{ji})  .$$

\noindent We call these rankings the ranking by indirect dependence, indirect influence, and importance.
  The ranking  by importance on networks $ W $ and $Y$  from Figure \ref{f1} are:
\begin{center}
\begin{tabular}{|c|c|}
  \hline
 \  \   & Nodes Ranking by Importance \\  \hline
  \   Network Y \ & \ \ $3 >  1  >  5  >  6  >  4  >  2$ \ \ \\  \hline
  \ Network W \ & \ \ $5 >  3  > 1  >  4  >  6  >  2$ \ \ \\  \hline
 \end{tabular}
\end{center}

\begin{figure}[t]
\centering
\begin{tabular}{@{}cc@{}}
\includegraphics[width=7cm, height=4cm]{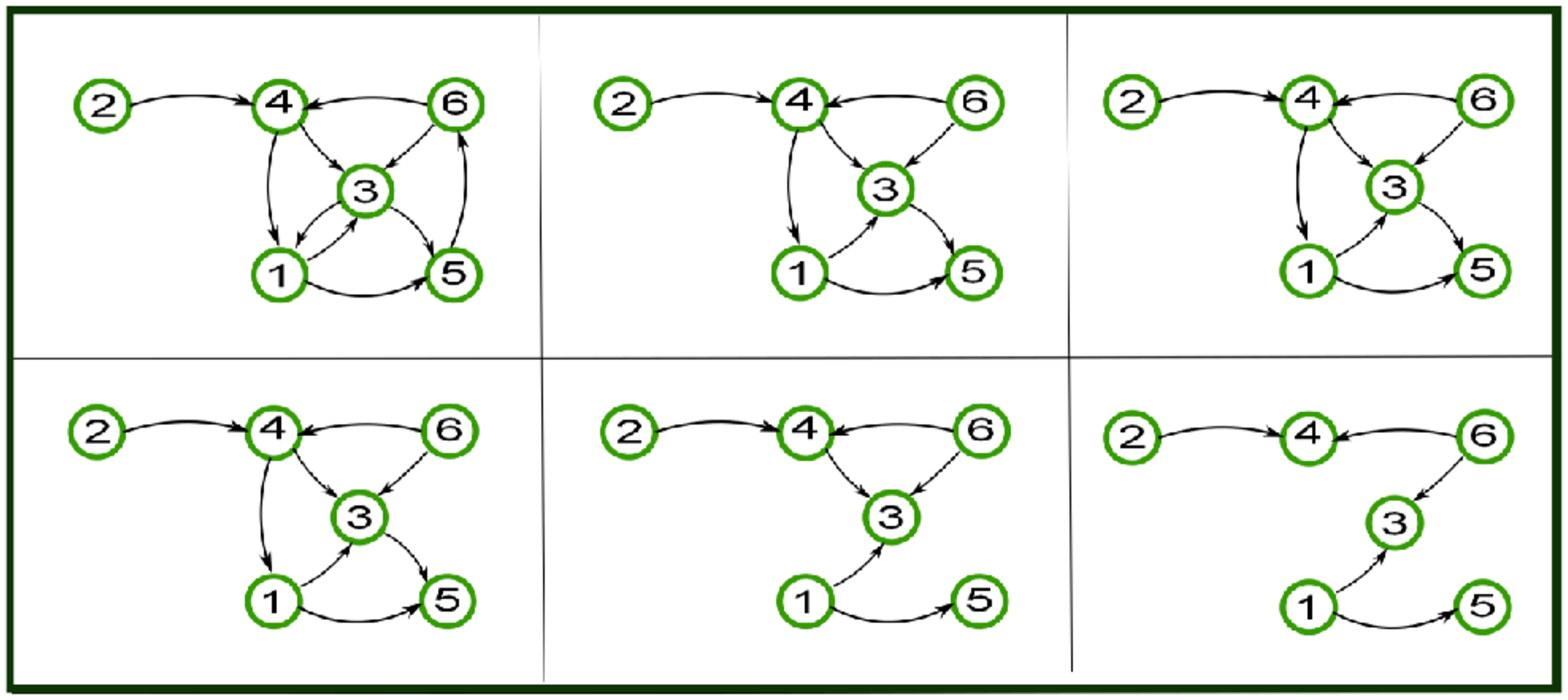}
\
\
\
\
\
\
\
\includegraphics[width=7cm, height=4cm]{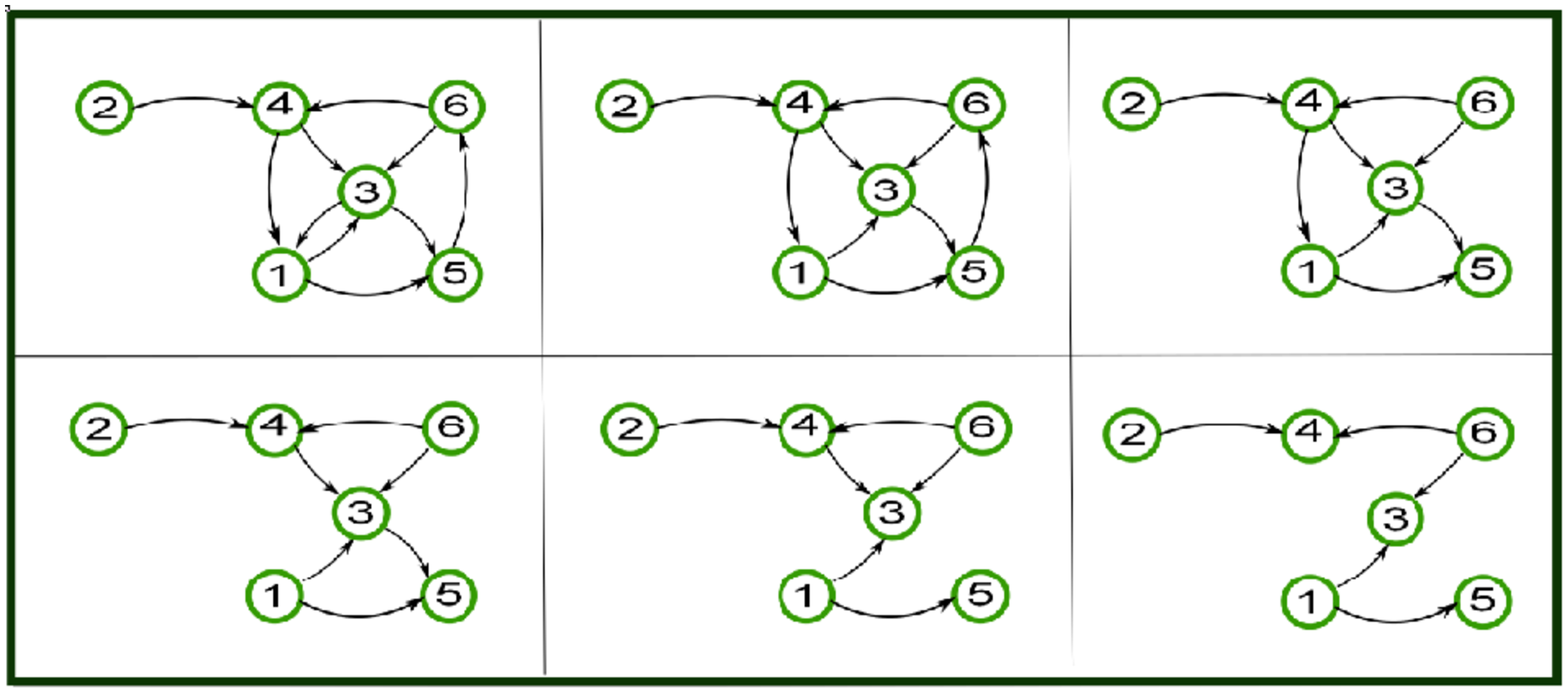}
\end{tabular}
\caption{Left: Clustering by importance process using dual or bridge constructions. Right: Clustering by  importance process using barycentric construction.}
\label{cdb}
\end{figure}
\noindent We extend the PWP map to double weighted networks  by defining a map
$$ T:\mathrm{M}_{n}(\mathbb{R})\times \mathbb{R}^{n} \rightarrow  \mathrm{M}_{n}(\mathbb{R})   $$
sending  a pair $(D,f) $ measuring direct influences and weight of nodes,  to the matrix
$T =  T(D,f,\lambda) $ measuring indirect influences among nodes. Let
$_{\bullet}: \mathrm{M}_{n}(\mathbb{R})\otimes \mathbb{R}^{n}  \rightarrow  \mathrm{M}_{n}(\mathbb{R}) $
be the linear map given on $D \otimes f$  by
$  (D_{\bullet} f)_{ij}  =   D_{ij}f_j. $

\begin{defn}{\em
 The PWP map $ T:\mathrm{M}_{n}(\mathbb{R})\times \mathbb{R}^{n}  \rightarrow  \mathrm{M}_{n}(\mathbb{R})  $
is given for $\lambda \in \mathbb{R}_{\geq 0} $ by
$$T(D,f, \lambda)  =    \frac{e^{\lambda D_{\bullet}f}  -  I}{e^{\lambda}  -  1}  =
  \frac{\sum_{k=1}^{\infty}(D_{\bullet}f)^k\frac{\lambda^k}{k!}}{\sum_{k=1}^{\infty}\frac{\lambda^k}{k!}}  =
  \sum_{k=0}^{\infty}\bigg(\sum_{l=0}^{k}\frac{(k)_lB_{k-l}}{(l+1)!}
  (D_{\bullet}f)^{l+1}\bigg)\frac{\lambda^k}{k!} . $$}
\end{defn}

Next we give an explicit formula for the entries of the PWP matrix $T $ of indirect influences
for double weighted networks, which implies the probabilistic interpretation for $T$ given below.

\begin{prop}\label{v}
{\em Let $(V, E,s,t, w, f)$ be a double weighted network, the indirect influence of node $ v  $  on node $  u $
according to the PWP map  is given by
$$T_{uv}(\lambda) =  \frac{1}{e^{\lambda}-1} \sum_{k=1}^{\infty} \sum_{\underset{te_k=u,\ te_i=se_{i+1},
 \ se_1=v }{e_k,...,e_1}} w(e_k)f(se_k)....w(e_1)f(se_1)\frac{\lambda^k}{k!}.$$
Equivalently, the matrix of indirect influences $ T(D,f, \lambda) $ is given  by
$$T_{ij}  =
\frac{1}{e^{\lambda}-1}\sum_{k=1}^{\infty}\Big(\underset{i=i_k, \ldots , i_0=j}{\sum}
 D_{i_ki_{k-1}}f_{i_{k-1}} \cdots D_{i_1i_0}f_{i_0} \Big)\frac{\lambda^k}{k!} .$$}
\end{prop}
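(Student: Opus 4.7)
The plan is to reduce both formulas to a single computation of the matrix power $(D_\bullet f)^k$, then substitute into the series defining $T$.

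First I would establish by induction on $k \geq 1$ the identity
$$\bigl((D_\bullet f)^k\bigr)_{ij} \;=\; \sum_{i=i_k,\,i_{k-1},\ldots,i_1,\,i_0=j} D_{i_k i_{k-1}}\,f_{i_{k-1}}\,\cdots\, D_{i_1 i_0}\,f_{i_0}.$$
The base case $k=1$ is the definition $(D_\bullet f)_{ij} = D_{ij}f_j$. For the inductive step, write $(D_\bullet f)^{k+1} = (D_\bullet f)\cdot(D_\bullet f)^k$ and expand using $(D_\bullet f)_{ab} = D_{ab}f_b$; the matrix product introduces exactly one new summation index, which plays the role of $i_k$ in a length-$(k{+}1)$ sequence, and inserts the additional factor $D_{i_{k+1}i_k}f_{i_k}$ at the left end of the telescoping product.

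Substituting this into the series of the definition,
$$T(D,f,\lambda) \;=\; \sum_{k=1}^{\infty}\frac{\lambda^k}{e^{\lambda}-1}\,\frac{(D_\bullet f)^k}{k!},$$
yields the second (matrix-entry) formula of the proposition immediately.

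To derive the first (edge-path) formula I would re-express each entry of $D$ as $D_{i_r i_{r-1}} = \sum_{e:\, s(e)=i_{r-1},\, t(e)=i_r} w(e)$. Expanding the product $D_{i_k i_{k-1}}\cdots D_{i_1 i_0}$ converts the sum over node sequences $(i_k,\ldots,i_0)$ into a sum over link sequences $(e_1,\ldots,e_k)$; the compatibility constraints $t(e_r)=s(e_{r+1})$ arise automatically because the intermediate node $i_r$ is simultaneously the target of $e_r$ and the source of $e_{r+1}$, while the endpoint conditions become $s(e_1)=j=v$ and $t(e_k)=i=u$. The identification $f_{i_{r-1}} = f(s(e_r))$ then reattaches each node weight to the source of the corresponding link, and relabeling $i=u$, $j=v$ gives the stated edge-path expression. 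The main delicate point is getting the index conventions straight, specifically which side of $D_{ab}$ corresponds to source versus target and verifying that the $f$-weights attach to the sources rather than the targets of the edges; once those bookkeeping conventions are fixed, the two formulas are equivalent reformulations of the same matrix power.
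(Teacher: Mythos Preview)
Your argument is correct. The paper states Proposition~\ref{v} without proof, treating it as an immediate unfolding of the definition of $T(D,f,\lambda)$; your write-up supplies exactly that unfolding---the induction on $k$ for $(D_{\bullet}f)^k$ followed by the expansion of each $D_{i_ri_{r-1}}$ as a sum over edges---and the bookkeeping of sources, targets, and the attachment of $f$-weights to sources is consistent with the paper's conventions.
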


\begin{figure}[t]
\centering
\begin{tabular}{@{}cc@{}}

\includegraphics[width=11cm, height=4cm]{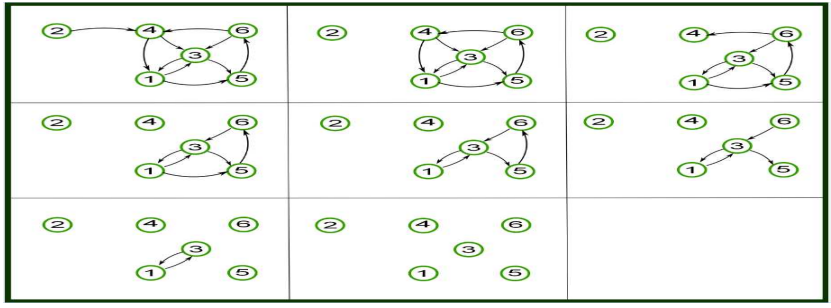}
\end{tabular}
\caption{Core-periphery finding  process by importance on network $Y$ for both the dual and bridge constructions.}
\label{cpdb}
\end{figure}

We regard $\frac{\lambda^k}{(e^{\lambda}  -  1)k!} $ as a probability measure on $ \mathbb{N}_{>0}. \ $
Note that  $ \frac{\lambda^k}{k!} \leq  \frac{\lambda^{k+1}}{(k+1)!}  $
if and only if $  k \leq \lambda -1,$ therefore $   \frac{\lambda^k}{(e^{\lambda}-1)k!} $
achieves its maximum at $ \lfloor \lambda \rfloor.  $ The mean and variance of
$ \frac{\lambda^k}{(e^{\lambda}-1)k!} $ are respectively  $ \frac{\lambda e^{\lambda}}{e^{\lambda}-1} $
 and $ \frac{\lambda e^{2\lambda} - \lambda e^{\lambda} - \lambda^2e^{\lambda}}{(e^{\lambda}-1)^2}.  $ By Chebyschev's
theorem we have for $ a>0  $ and $l \in \mathbb{N}_{>0}  $ that
$$\mathrm{prob} \bigg( \big|l- \frac{\lambda e^{\lambda}}{e^{\lambda}-1}\big| \geq a \frac{\lambda e^{2\lambda} - \lambda e^{\lambda} - \lambda^2e^{\lambda} }{(e^{\lambda}-1)^2} \bigg) \leq  \frac{1}{a^2}.$$ For example setting
$ \lambda=1, a=10 $ we get that $ \mathrm{prob} \big( l \geq 9 \big) \leq  10^{-2}.  $

Proposition \ref{v} implies a probabilistic interpretation for  $ T_{ij}  $  under the
assumption that $ D_{ij}, f_i \in [0,1]  $  give the probabilities that the
link $j \rightarrow i $ and the node $ i  $ be active, respectively.
Under these assumptions it is natural to let the probability
that a path $ (i_0,i_1,...,i_k)  $ be active be
$\ D_{i_ki_{k-1}}f_{i_{k-1}} \cdots D_{i_1i_0}f_{i_0} \frac{\lambda^k}{(e^{\lambda}-1)k!} ,\ $
i.e. being active is an independent property among the components
of a path, links and nodes; a length dependent correcting factor
$  \frac{\lambda^k}{(e^{\lambda}-1)k!}  $ is included making long path less likely to be active.

\begin{figure}[t]
\centering
\begin{tabular}{@{}cc@{}cc@{}}
\includegraphics[width=5.1cm, height=6.2cm]{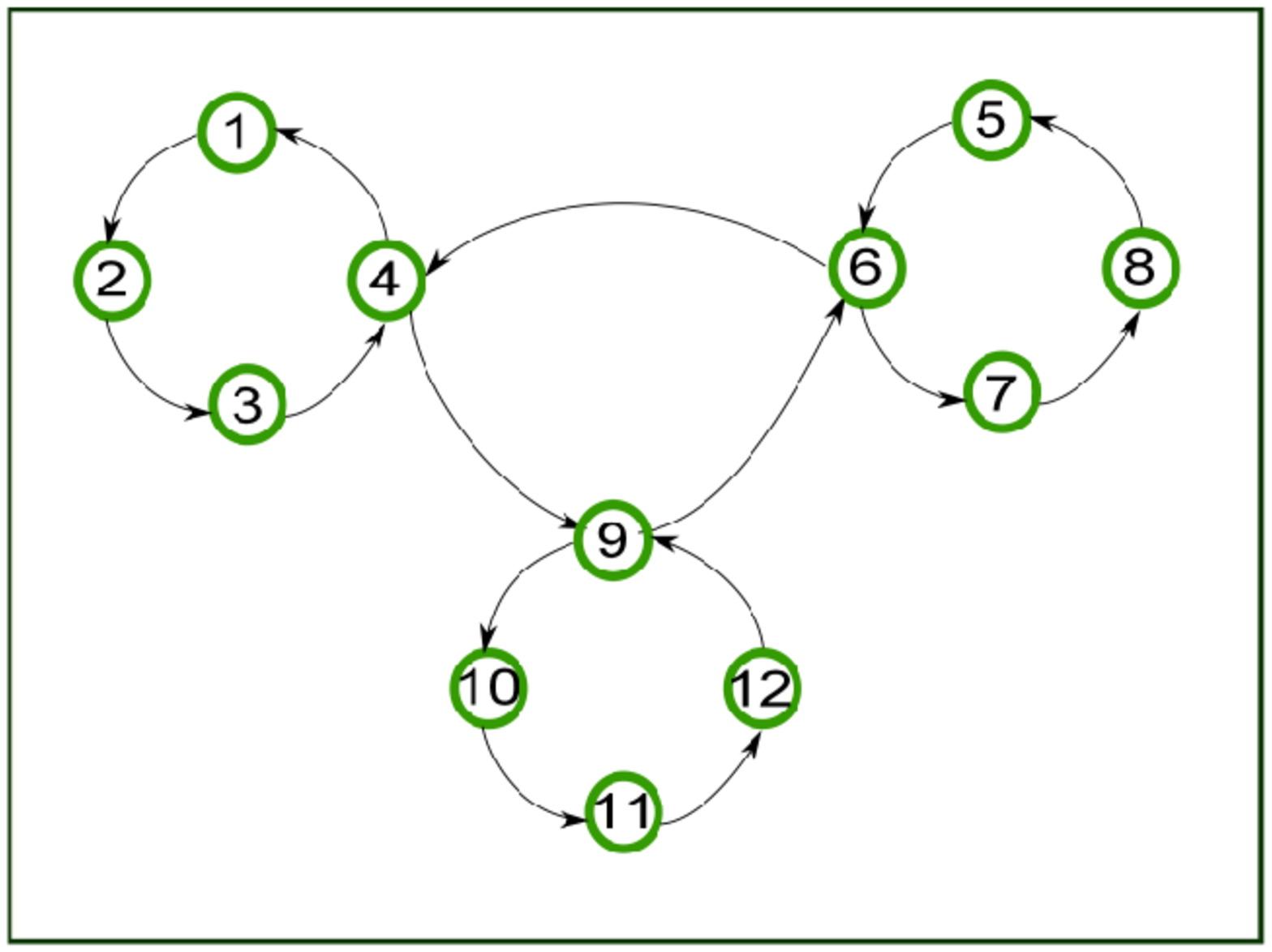}
\

\includegraphics[width=5.1cm, height=6.2cm]{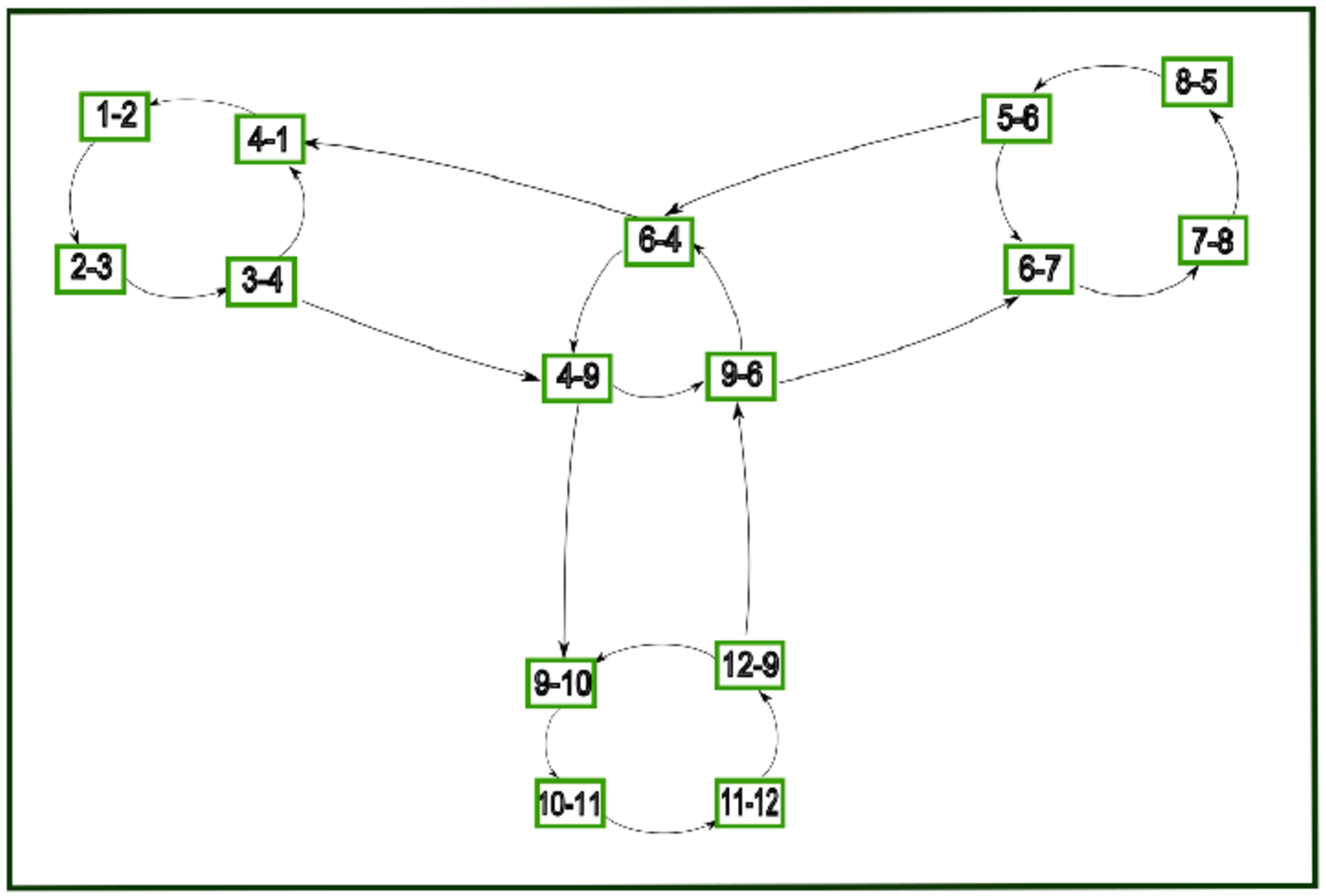}
\

\includegraphics[width=5.1cm, height=6.2cm]{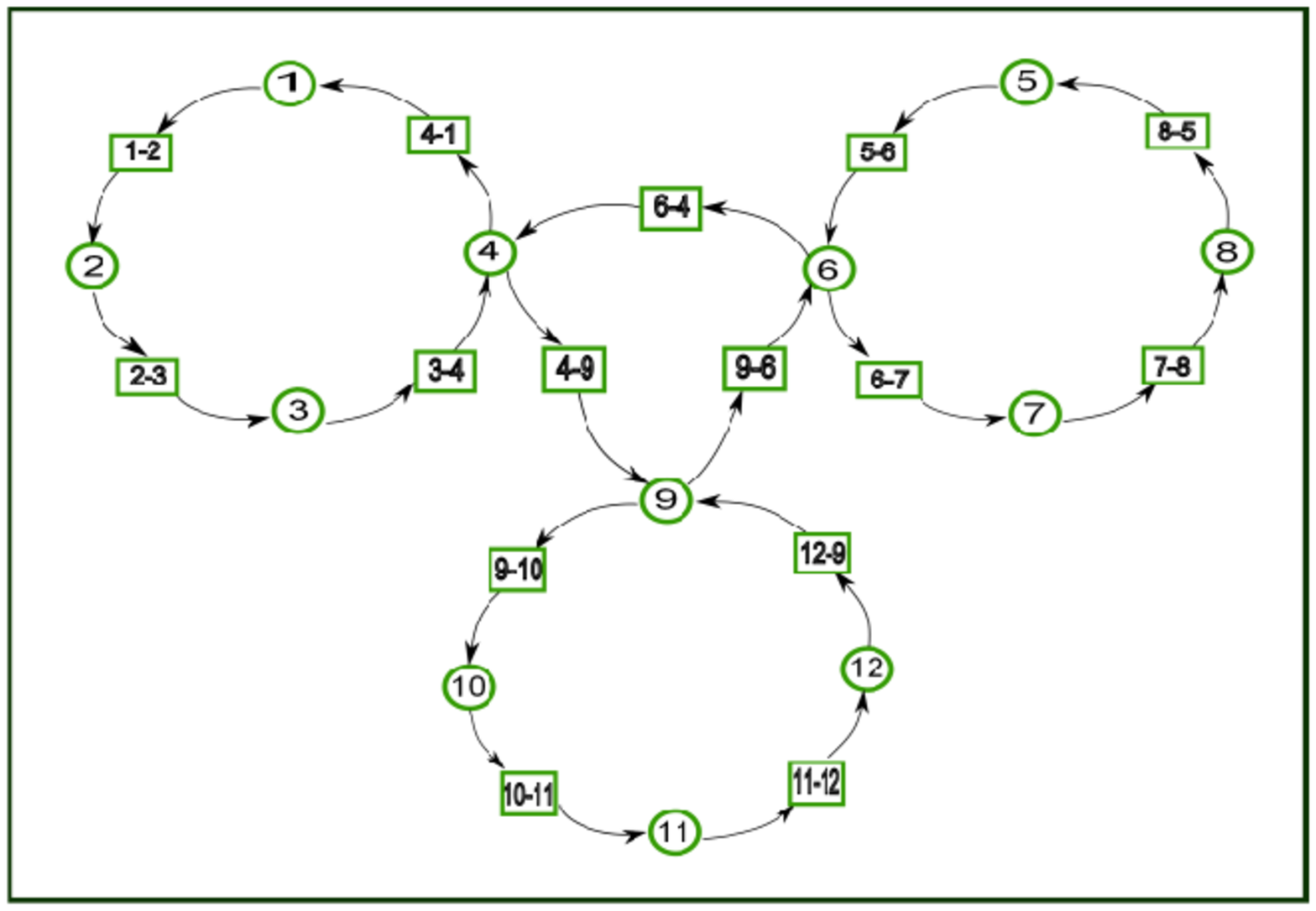}
\end{tabular}
\caption{Left: Network $S$. Centre: Barycentric Network  $S^{\circ}$.
Right: Dual Network $S^{\star}.$}
\label{jjj1}
\end{figure}

\begin{thm}
{\em  Consider a double weighted network with $ D_{ij},  f_i \in [0,1] $ giving the probabilities that the
link $j \rightarrow i $ and the node $ i  $ be active, respectively.
The indirect influence $  T_{ij}  $ of node $ j  $ on node $ i  $  is the expected number of active paths  from $  j  $ to $ i. $
}
\end{thm}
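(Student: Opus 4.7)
The plan is to unpack the formula for $T_{ij}$ given in Proposition \ref{v} and match it term by term against a linearity-of-expectation computation for the random variable counting active paths.

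First I would fix nodes $i,j$ and introduce, for each finite sequence of links $P=(e_1,\ldots,e_k)$ forming a path with $se_1=j$, $te_k=i$, and $te_r=se_{r+1}$, an indicator random variable $\mathbf{1}_P$ taking the value $1$ when $P$ is active and $0$ otherwise. The number of active paths from $j$ to $i$ is then
$$X_{ij} \ = \ \sum_{k=1}^{\infty}\sum_{P:\ j\to i,\ |P|=k}\mathbf{1}_P.$$
By linearity of expectation, $\mathbb{E}[X_{ij}]=\sum_{k,P}\mathbb{P}(P \text{ is active})$. Here I would invoke the independence assumption stated in the paragraph following Proposition \ref{v}: each link $e$ is active independently with probability $w(e)$ (equivalently the entry $D_{st}$), each node $v$ is active independently with probability $f(v)$, and the length-dependent correction $\lambda^k/((e^\lambda-1)k!)$ is applied as the prescribed factor for a path of length $k$. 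Thus, for a path $P=(e_1,\ldots,e_k)$ with intermediate nodes $se_1,\ldots,se_k$,
$$\mathbb{P}(P\text{ active}) \ = \ w(e_k)f(se_k)\cdots w(e_1)f(se_1)\,\frac{\lambda^k}{(e^{\lambda}-1)k!}.$$

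Substituting into the expectation and rearranging the $\lambda$-dependent factor outside the inner sum yields
$$\mathbb{E}[X_{ij}] \ = \ \frac{1}{e^{\lambda}-1}\sum_{k=1}^{\infty}\Bigg(\sum_{\substack{e_1,\ldots,e_k\\ se_1=j,\ te_k=i\\ te_r=se_{r+1}}} w(e_k)f(se_k)\cdots w(e_1)f(se_1)\Bigg)\frac{\lambda^k}{k!},$$
which is precisely the expression for $T_{ij}(\lambda)$ furnished by Proposition \ref{v} (in its link form, or equivalently its matrix form after translating to the pair $(D,f)$).

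The argument is essentially definitional once Proposition \ref{v} is in hand; the only conceptual step is recognising that the factor $\lambda^k/((e^{\lambda}-1)k!)$ appearing in the series expansion of the PWP map can be legitimately absorbed into a probability measure on path lengths, so that each summand of the Proposition's formula is genuinely the expected contribution of one path. The main (mild) obstacle is thus making sure that the product $D_{i_ki_{k-1}}f_{i_{k-1}}\cdots D_{i_1 i_0}f_{i_0}\,\lambda^k/((e^{\lambda}-1)k!)$ really defines a probability of an event (namely that it is bounded by $1$, since all $D_{ab},f_a\in[0,1]$ and $\lambda^k/((e^{\lambda}-1)k!)\le 1$), so that the linearity-of-expectation step is unambiguous.
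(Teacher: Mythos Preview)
Your proposal is correct and follows essentially the same route as the paper's own proof: both set up independent Bernoulli variables for link and node activity, absorb the length factor $\lambda^k/((e^{\lambda}-1)k!)$ into the probability model, and then compute the expected number of active paths by linearity of expectation to recover the formula of Proposition~\ref{v}. The only cosmetic difference is that the paper builds the probability space $(\Omega\times\mathbb{N}_{\geq 1},p)$ explicitly before taking expectations, whereas you work directly with indicator variables $\mathbf{1}_P$; the underlying argument is identical.
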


\begin{proof}Let $ \Omega $ be a probability space provided with
independent random variables $ \hat{D}_{ij},  \hat{f}_i : \Omega \rightarrow \{0,1 \} $ such that
 $ E(\hat{D}_{ij})= D_{ij} $
and $ E( \hat{f}_i)=f_i. $ Let  $ (\Omega \times\mathbb{N}_{n \geq 1}, p) $
be the probability space with $ p(w,k)=p(w)\frac{\lambda^k}{(e^{\lambda}-1)k!}, $
and consider the random variables
$\hat{T}_{ij}: \Omega \times \mathbb{N}_{\geq 1} \rightarrow [0,1]  $ given by
$$ \hat{T}_{ij}(w,k)=
\underset{i=i_k, \ldots , i_0=j}{\sum}  D_{i_ki_{k-1}}(\omega)f_{i_{k-1}}(\omega)
 \cdots D_{i_1i_0}(\omega)f_{i_0}(\omega). $$
The expected number of active paths $E\hat{T}_{ij}$  from $ j  $ to $ i $ is given by
$$\sum_{k=1}^{\infty}\underset{i=i_k, \ldots , i_0=j}{\sum} E
\hat{D}_{i_ki_{k-1}} E\hat{f}_{i_{k-1}}
 \cdots E\hat{D}_{i_1i_0}E\hat{f}_{i_0}\frac{\lambda^k}{(e^{\lambda}-1)k!}  =$$
$$ \sum_{k=1}^{\infty}D_{i_ki_{k-1}}f_{i_{k-1}} \cdots D_{i_1i_0}f_{i_0}
 \frac{\lambda^k}{(e^{\lambda}-1)k!} = T_{ij}.$$
\end{proof}

\section{Network Deconstruction from Links Ranking}\label{dww}

Applying the Girvan-Newman \cite{girvannewman} clustering algorithm
we describe how to deconstruct a network assuming as given  a method for ranking links on networks.
Let the connected components of a directed network be the equivalence classes of nodes under the
equivalence relation generated by  adjacency.  The Girvan-Newman
algorithm iterates the following procedures: -Compute the ranking of links. -Remove the  links of highest rank.
The algorithm stops when there are no further links, and outputs a forest of rooted trees, dendrogram, determined by the following
properties: -The roots are the connected components of the original network. -The leaves are the nodes of
the original network. -The internal nodes are the connected components of the various networks
that arise as the procedures above are iterated. -There is an arrow from node
$a $ to node $ b$ if and only if $ a \subset b $ and there is no node $ c  $ such that  $a \subset c \subset b .$

\begin{figure}[t]
\centering
\includegraphics[width=16cm, height=7cm]{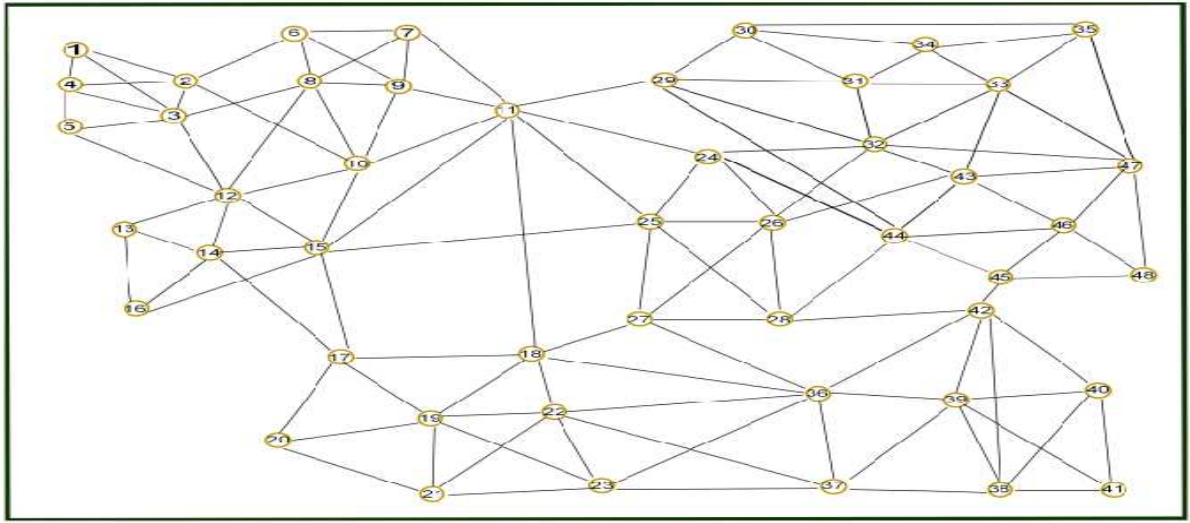}
\caption{Network $T$}
\label{t}
\end{figure}

Application of this algorithm provides a network deconstruction as it gradually eliminates
links until  reaching the  network with no links. Reading the resulting forest from leaves to
roots we obtain a reconstruction of our network, a genealogy of the various interrelated
components of the network. The properties of the network revealed by the deconstruction
procedure very much depend, as we shall see, on the choice of ranking among links.

We address the finding of rankings on links by  reducing it
to finding rankings on nodes: assuming  a ranking method on nodes
we propose three ranking methods on links, two of them  obtained by applying geometric
procedures (dual and barycentric constructions) to build a new network from the given one,
so that nodes of the new network encode information on the links of the original network.
The third one formalizes the intuition that a link is sort of a bridge, and thus its importance
is proportional to its functionality and to the importance of the nodes that it joints.
The first two methods use the ranking of nodes by importance, the third method uses
the rankings by indirect dependence and by indirect influence.

The original application of the Girvan-Newman \cite{girvannewman} decontruction algorithm
to clustering uses the ranking on links given by the betweenness degree, i.e. the number of geodesics (length minimizing directed paths)
passing trough a given link. By eliminating links of high betweennes the deconstruction process  uncovers
clusters. Running the deconstruction algorithm with the three links ranking methods proposed below
we obtain new clustering algorithms.  As we are free to choose the ranking on links in the deconstruction algorithm,
we may as well apply the rankings opposite to the rankings mentioned in the previous
paragraph and  eliminate links of the lowest importance,  thus uncovering
core-periphery structures, with the periphery being the nodes that become isolated early on,
and the core being the nodes in resilient connected components.

\section{Dual Double Weighted Networks}\label{dww4}

\begin{figure}[t]
\centering
\includegraphics[width=16cm, height=7cm]{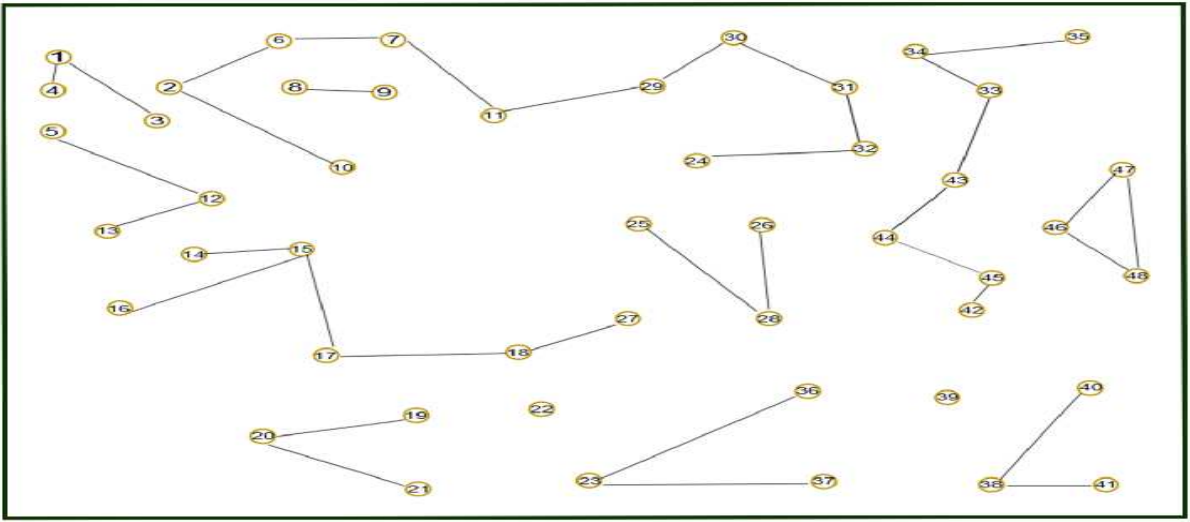}
\caption{Clusters in Network $T$ using Dual Construction}
\label{ct1}
\end{figure}

We introduce the dual  of a double weighted directed network via  the map
$$(\ )^{\star} : \mathrm{wwdigraph} \rightarrow  \mathrm{wwdigraph} $$ sending a network $G$ to its dual
network $G^{\star}.$  Figure \ref{jjj} displays the dual
  network $ Y^{\star} $ to the double weighted network $Y $ from Figure \ref{f1}. Given network
  $ (V, E,s,t, w, f) $ its dual  network
  $  (V^{\star}, E^{\star},s^{\star},t^{\star}, w^{\star}, f^{\star}) $ is such that:

\begin{itemize}
  \item $V^{\star}=  E \ $ and $\ E^{\star}  =  \{(e,v,h) \in E\times V \times E \  |  \ te=v=sh\}.\  $
   For $ e \in  V^{\star} \ $ set $ \ f^{\star}(e)  =  w(e).$
  \item For $ (e,v,h) \in E^{\star} \ $ set  $ \ (s^{\star},t^{\star})(e,v,h) =  (e,h)\ $
  and $\ w^{\star}(e,v,h)  = \frac{f(v)}{\mathrm{out}(v)\mathrm{in}(v)},$
where $\mathrm{out}(v)$ and  $\mathrm{in}(v)$ are the out-degree and in-degree of node $v$.

\end{itemize}

The dual construction applied to networks without multiple links may be identified with the map
$ (\ )^{\star}:  \mathrm{M}_n(\mathbb{R})\times \mathbb{R}^n   \rightarrow
 \mathrm{M}_{n^2}(\mathbb{R})\times \mathbb{R}^{n^2}, $ where
 $\mathrm{M}_{n^2}(\mathbb{R}) $ is the space of maps $[n]^2\times [n]^2
  \rightarrow  \mathbb{R}.$   The map $(\ )^{\star}$ sends  $ (D,f) $
  to the pair $ (D^{\star},f^{\star}) $ given by $ f^{\star}_{(i,j)} =  D_{ij}; \ $
$ D^{\star}_{(i,j)(l,k)}=0$ if either $  j\neq l,  $  or $  D_{ij}=0,   $ or $ D_{lk}=0 ;\ $
and $ D^{\star}_{(i,j)(j,k)}=  \frac{f_j}{\mathrm{out}(j)\mathrm{in}(j)}
 $ if $  D_{ij}\neq 0 \  \mbox{and} \ D_{lk}\neq 0. $

\begin{figure}[t]
\centering
\includegraphics[width=16cm, height=7cm]{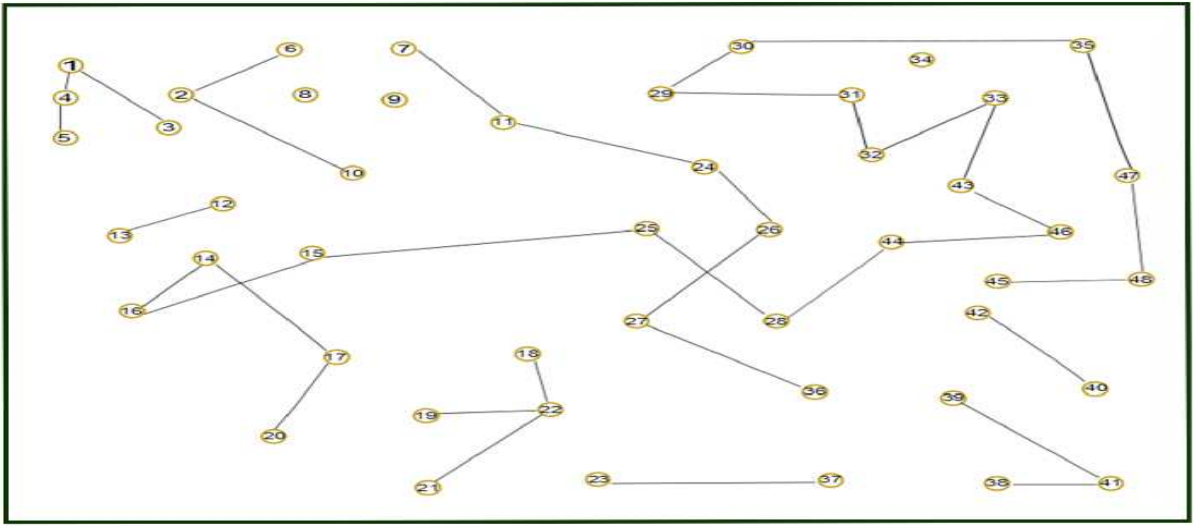}
\caption{Clusters in Network $T$ using Barycentric Division}
\label{ct2}
\end{figure}

A node-ranking map $ \mathrm{R}_n : \mathrm{M}_{n}(\mathbb{R})\times \mathbb{R}^{n}  \rightarrow
 \mathrm{ranking} [n] $ on weighted networks gives rise to link-ranking map
 $ \mathrm{R}_n^{\star} : \mathrm{M}_{n}(\mathbb{R})\times \mathbb{R}^{n}
  \rightarrow  \mathrm{ranking}[n]^2 $  given by $  \mathrm{R}_n^{\star}  =
    \mathrm{R}_{n^2} \circ (\ )^{\star} .$ Looking at the node-ranking maps $  \mathrm{E}, \mathrm{F},\mathrm{I}:
     \mathrm{M}_{n}(\mathbb{R})\times \mathbb{R}^{n}   \rightarrow  \mathrm{ranking} [n]  $
     introduced in Section \ref{pwpwwny}, we obtain the corresponding link-ranking maps
     by indirect dependence, influence, and importance
$ \mathrm{E}^{\star}, \mathrm{F}^{\star}, \mathrm{I}^{\star} :
\mathrm{M}_{n}(\mathbb{R})\times \mathbb{R}^{n}  \rightarrow   \mathrm{ranking}[n]^2. $

\begin{thm}\label{t1}
{\em The PWP matrix of indirect influences on the dual double weighted network  $  (V^{\star}, E^{\star},s^{\star},t^{\star}, w^{\star}, f^{\star}) $
is given
for $e,f \in E$ by
$$ T_{ef}^{\star}  =  \frac{1}{e^{\lambda}-1}\sum_{k=1}^{\infty}\Big(\underset{\underset{te_i=se_{i+1}}{e=e_k, \ldots , e_0=f}}{\sum} \ \frac{f(te_{k-1})w(e_{k-1}) \cdots f(te_0)w(e_0)}{\mathrm{out}(te_{k-1}) \mathrm{in}(te_{k-1})\cdots \mathrm{out}(te_{0})\mathrm{in}(te_{0})}  \Big)\frac{\lambda^k}{k!}.$$
or  in matrix notation
$$ T_{(m,l)(i,j)}^{\star}  =
\frac{1}{e^{\lambda}-1}\sum_{k=1}^{\infty}\Big(\underset{l=i_k, \ldots , i_1=i}
{\sum} \ \frac{f_{i_k}D_{i_ki_{k-1}} \cdots f_{i_1}D_{i_1j}}{\mathrm{out}(i_k)
\mathrm{in}(i_k)\cdots \mathrm{out}(i_1)\mathrm{in}(i_1)}  \Big)\frac{\lambda^k}{k!}.$$
}
\end{thm}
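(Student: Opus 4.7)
The plan is to apply Proposition \ref{v} directly to the dual network $G^{\star}=(V^{\star},E^{\star},s^{\star},t^{\star},w^{\star},f^{\star})$ as if it were an arbitrary double weighted network, and then translate its ingredients back in terms of data of the original $G$. Since Proposition \ref{v} gives an explicit formula for $T_{uv}(\lambda)$ as a sum over paths of edges in any double weighted network, applied to $G^{\star}$ it immediately yields
$$T^{\star}_{ef} \ = \ \frac{1}{e^{\lambda}-1}\sum_{k=1}^{\infty}\sum_{\epsilon_{1},\dots,\epsilon_{k}} w^{\star}(\epsilon_{k})f^{\star}(s^{\star}\epsilon_{k})\cdots w^{\star}(\epsilon_{1})f^{\star}(s^{\star}\epsilon_{1})\frac{\lambda^{k}}{k!},$$
where the inner sum runs over dual-paths $\epsilon_{1},\dots,\epsilon_{k}\in E^{\star}$ satisfying $s^{\star}\epsilon_{1}=f$, $t^{\star}\epsilon_{k}=e$, and $t^{\star}\epsilon_{i}=s^{\star}\epsilon_{i+1}$.

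The next step is to decode the dual-path data. A dual-edge is a triple $(a,v,b)$ with $ta=v=sb$, with source $a$ and target $b$. Hence a dual-path $\epsilon_{1},\dots,\epsilon_{k}$ from $f$ to $e$ is precisely a sequence $\epsilon_{i}=(e_{i-1},v_{i},e_{i})$ with $e_{0}=f$, $e_{k}=e$, and with $v_{i}=te_{i-1}=se_{i}$ forced by the constraints. Dual-paths of length $k$ from $f$ to $e$ therefore biject with sequences $(e_{0},\dots,e_{k})$ of $k+1$ original edges such that $te_{i-1}=se_{i}$ for every $i$. Plugging in the structural definitions $f^{\star}(s^{\star}\epsilon_{i})=f^{\star}(e_{i-1})=w(e_{i-1})$ and $w^{\star}(\epsilon_{i})=f(v_{i})/(\mathrm{out}(v_{i})\mathrm{in}(v_{i}))=f(te_{i-1})/(\mathrm{out}(te_{i-1})\mathrm{in}(te_{i-1}))$ and combining yields the first displayed formula.

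To obtain the matrix-notation version I would specialize to the simple-network setting and identify each edge with its (target, source) pair. For an intermediate edge $e_{r}=(y_{r},y_{r-1})$ of the chain we have $w(e_{r})=D_{y_{r}y_{r-1}}$ and $te_{r}=y_{r}$, and the chaining condition $te_{r-1}=se_{r}$ is automatic from this parametrization. Writing the source-node $(i,j)$ and target-node $(m,l)$ of the dual as $y_{-1}=j$, $y_{0}=i$, $y_{k-1}=l$, $y_{k}=m$, and relabeling $i_{r}=y_{r-1}$ for $r=1,\dots,k$ converts the product in the first formula into exactly $f_{i_{k}}D_{i_{k}i_{k-1}}\cdots f_{i_{1}}D_{i_{1}j}$ divided by the product of $\mathrm{out}(i_{r})\mathrm{in}(i_{r})$, with the inner sum running over the free middle indices $i_{2},\dots,i_{k-1}$.

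There is no deep conceptual obstacle here; the content of the theorem is simply an unpacking of Proposition \ref{v} for the network $G^{\star}$. The main care needed is bookkeeping: a dual-path of length $k$ corresponds to $k+1$ consecutive original edges but produces only $k$ factors in the product (one $f^{\star}\cdot w^{\star}$ factor per dual-edge, coming from the $k$ intermediate vertices $v_{1},\dots,v_{k}$), which is why the index $m=te_{k}$ appears in the label $T^{\star}_{(m,l)(i,j)}$ through the existence of the last dual-node $(m,l)$ but does \emph{not} appear inside the summand — the outermost edge weights $D_{ml}$ and $D_{ij}$ have been absorbed into $f^{\star}_{(m,l)}$ and $f^{\star}_{(i,j)}$ and thus do not enter the formula for $T^{\star}$.
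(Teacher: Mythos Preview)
Your approach is correct and is exactly the intended one: the paper states this theorem without proof, since it is a direct specialization of Proposition~\ref{v} to the dual network $G^{\star}$, which is precisely what you carry out. One small slip in your closing commentary: $D_{ij}$ \emph{does} appear in the summand (as $D_{i_{1}j}$ with $i_{1}=i$, arising from $f^{\star}(e_{0})=w(e_{0})$); it is only $D_{ml}=w(e_{k})$ that is absent, since $f^{\star}(e_{k})$ never occurs in the product of Proposition~\ref{v}.
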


Using the dual network  $ Y^{\star} $ one gets the following ranking on the links of  network $Y:$

\begin{center}
\begin{tabular}{|c|c|}
  \hline
  \ \ & Ranking the links of network $\ Y \ $ dual method \\  \hline
  \ Influence \  & \ $56\ > \ 24 \ >  \ 64\  > \ 31 \ > \ 13 \ >\ 43 >\ 41 >\ 63 \ >\ 15 \ >\ 35$\ \\  \hline
  \ Dependence \ & \ $31,35\ > \ 56 \ > \ 13,15 \ > \ 64,63\ > \ 43,41 \ > \ 24$\ \\  \hline
  \ Importance \ & \ $56\ > \ 31 \ > \ 13 \ > \ 35 \ > \ 15 \ > \ 64\ > \ 63 \ > \ 43 \  > \ 41\  > \ 24$\ \\  \hline

\end{tabular}
\end{center}

We are ready to apply the deconstruction method from Section \ref{dww}, regarded
as a clustering method, ranking links of $ Y $ by importance using the dual method.
Figure \ref{cdb} displays on the left the various stages as we deconstruct network
$ Y  $ until the bare network is reached. Note that nodes are separated into various
components only at the very last step where  all remaining links have the same importance,
as no concatenation of links is even possible.  Thus there is only one cluster  encompassing
all nodes. Figure \ref{cpdb} shows the core-periphery finding process for network
$ Y  $ considering the rank of links by importance and using the dual construction, revealing
a core consisting of four rings: the core $\{1,3\}$, second ring $\{5,6 \},$
third ring $\{4\}$ and the periphery $\{2\}. $

\begin{figure}[t]
\centering
\includegraphics[width=16cm, height=7cm]{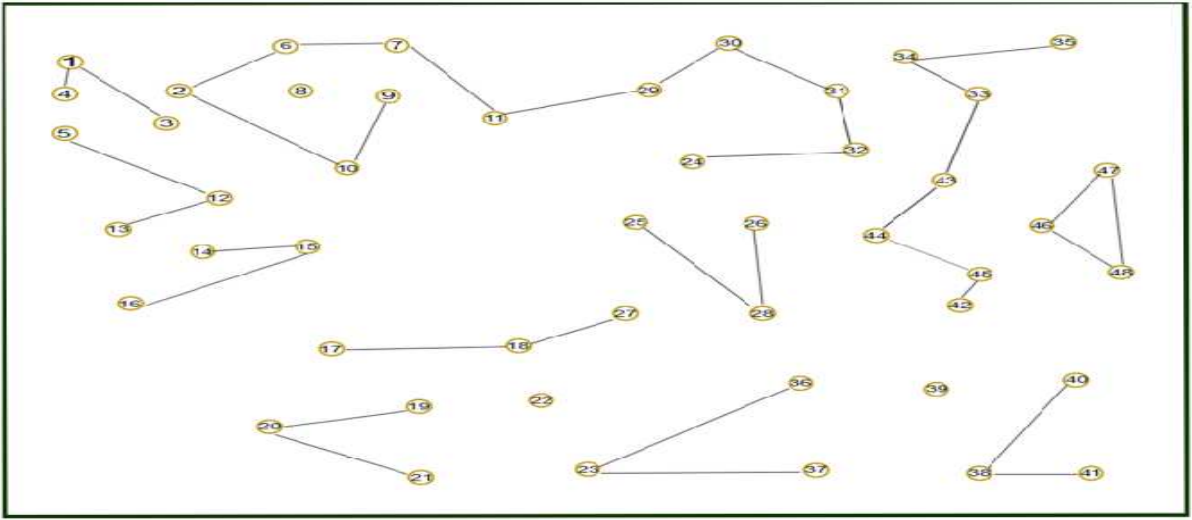}
\caption{Clusters in Network $T$ using Bridge Approach}
\label{ct3}
\end{figure}

\section{Barycentric Division of Double Weighted Networks}\label{dwwn5}

The barycentric division of double weighted networks is a construction that places nodes and links of
networks on the same footing. Figure \ref{jjj} shows the  barycentric division $Y^{\circ}$ of network $Y$.
This construction allows to compare the importance of nodes with the importance of links, thus providing
a precise formulation of the question of whether a network is dominated
by actors or  by relations. The barycentric division map
$$  (\ )^{\circ}: \mathrm{wwdigraph}  \rightarrow  \mathrm{wwdigraph}$$ turns nodes and links
of $(V, E,s,t, w, f)$ into nodes of the network $(V^{\circ}, E^{\circ},s^{\circ},t^{\circ}, w^{\circ}, f^{\circ}) $ defined as follows:
\begin{itemize}
  \item $V^{\circ} =  V \sqcup E \  $ and  $\ E^{\circ}  =  \{(v,e) \in V \times E \ | \  v=se\}  \sqcup  \{(e,v) \in E\times V \  |  \ te=v\}.$
  \item For $(v,e) \in E^{\circ} \ $ set  $\ (s^{\circ},t^{\circ})(v,e)=  (v,e) \in  V^{\circ}\times V^{\circ} .$
  \item For $(e,v) \in E^{\circ} \ $ set  $\ (s^{\circ},t^{\circ})(e,v) =  (e,v)
  \in  V^{\circ} \times V^{\circ} .$
  \item For $(e,v) \ \mbox{and} \ (v,e) \in E^{\circ}\ $ set
  $\  w^{\circ}(v,e)  = w^{\circ}(e,v) =  1.$
  \item For $ v \in  V \subseteq V^{\circ}$  set $ \ f^{\circ}(v) =  f(v).$ \  For
  $ e \in  E \subseteq V^{\circ}\
   $  set $\  f^{\circ}(e) =  w(e).$
\end{itemize}

\begin{figure}[t]
\centering
\includegraphics[width=16cm, height=7cm]{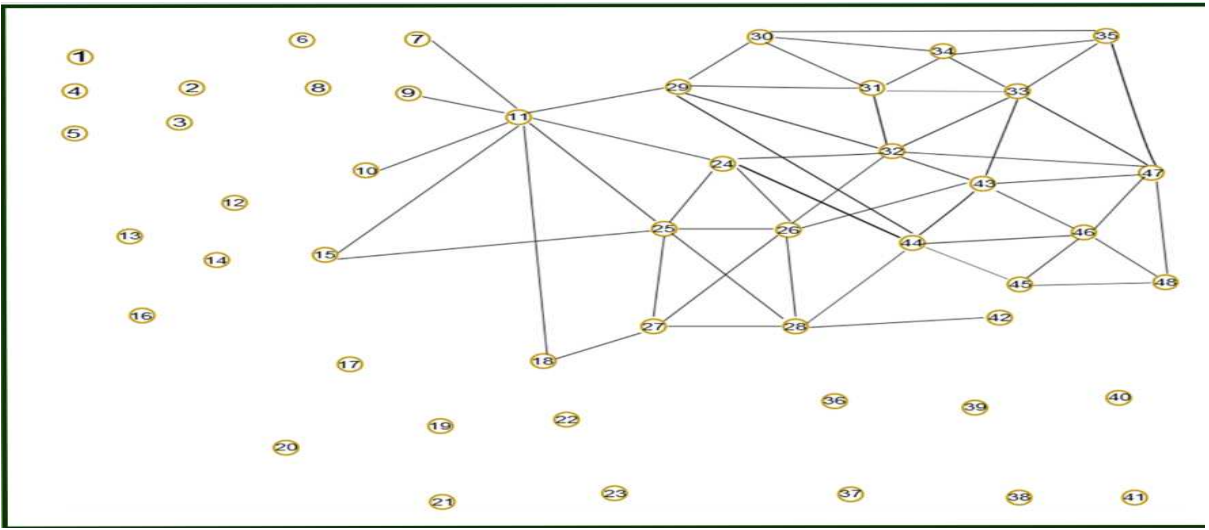}
\caption{Core of Network $T$ using Dual Construction}
\label{z1}
\end{figure}

The barycentric construction applied to networks without multiple links may be identified with the map
$ (\ )^{\circ}:  \mathrm{M}_n(\mathbb{R})\times \mathbb{R}^n   \rightarrow
 \mathrm{M}_{n^2+n}(\mathbb{R})\times \mathbb{R}^{n^2+n}, $
Given  a  node-ranking map
$  \mathrm{R}_n : \mathrm{M}_{n}(\mathbb{R})\times \mathbb{R}^{n}   \rightarrow   \mathrm{ranking}[n] $
on double weighted networks, we construct the link-ranking map
$\mathrm{R}_n^{\circ} : \mathrm{M}_{n}(\mathbb{R})\times \mathbb{R}^{n}   \rightarrow   \mathrm{ranking}[n]^2 $
given by $ R_n^{\circ}   = r\circ R_{n^2+n} \circ (\ )^{\circ} ,$ where
$r: \mathrm{ranking}[n^2+n] = \mathrm{ranking}([n]^2\sqcup [n]) \rightarrow \mathrm{ranking}[n]^2$ is the restriction map.

\begin{thm}\label{t2}
{\em  The PWP matrix of indirect influences on the barycentric division double weighted network $(V^{\circ}, E^{\circ},s^{\circ},t^{\circ}, w^{\circ}, f^{\circ}) $
is given
for $e,f \in E$ by
$$ T_{ef}^{\circ}  =  \frac{1}{e^{\lambda}-1}\sum_{k=1}^{\infty}\Big(\underset{\underset{se_{i+1}=te_i}{e=e_k, \ldots , e_0=f}}{\sum} \ f(te_{k-1})w(e_{k-1}) \cdots f(te_0)w(e_0)  \Big)\frac{\lambda^{2k}}{(2k)!}.$$
or in matrix notation
$$ T_{(m,l)(i,j)}^{\circ}  =  \frac{1}{e^{\lambda}-1}\sum_{k=1}^{\infty}\Big(\underset{l=i_k, \ldots , i_1=i}{\sum} \ f_{i_k}D_{i_ki_{k-1}} \cdots f_{i_1}D_{i_1 j} \Big)\frac{\lambda^{2k}}{(2k)!}.$$
}
\end{thm}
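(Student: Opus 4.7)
The plan is to apply Proposition \ref{v} directly to the barycentric network $G^{\circ}=(V^{\circ}, E^{\circ}, s^{\circ}, t^{\circ}, w^{\circ}, f^{\circ})$ and then to exploit the bipartite structure of $G^{\circ}$ to simplify the resulting sum. The first observation is that every edge in $E^{\circ}$ has one endpoint in $V \subset V^{\circ}$ and one endpoint in $E \subset V^{\circ}$. Consequently every walk in $G^{\circ}$ alternates between node-vertices and edge-vertices, so a walk from $f \in E$ to $e \in E$ has even length $2k$ for some $k \geq 1$; in Proposition \ref{v}'s sum over path lengths, only the terms with $K = 2k$ can contribute to $T^{\circ}_{ef}$, which already explains the appearance of $\lambda^{2k}/(2k)!$ rather than $\lambda^k/k!$.

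Next I would parameterize such walks explicitly. A length-$2k$ walk from $f$ to $e$ has the form $f = e_0 \to v_0 \to e_1 \to v_1 \to \cdots \to v_{k-1} \to e_k = e$ with $e_i \in E$ and $v_i \in V$. The definition of $E^{\circ}$ imposes $(e_i, v_i) \in E^{\circ}$, which forces $v_i = t(e_i)$, and $(v_i, e_{i+1}) \in E^{\circ}$, which forces $v_i = s(e_{i+1})$. Hence each $v_i$ is determined by $e_i$, and the walk is parameterized by a tuple $(e_0, \ldots, e_k)$ subject to $s(e_{i+1}) = t(e_i)$, which matches the summation range in the statement.

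Finally I would compute the weight each such walk contributes. Since $w^{\circ} \equiv 1$ by construction, Proposition \ref{v} reduces to a product of source-weights $f^{\circ}(s^{\circ}(\cdot))$ along the $2k$ edges of the walk. Reading the sources off in order gives $e_0, v_0, e_1, v_1, \ldots, e_{k-1}, v_{k-1}$, and applying $f^{\circ}(e_i) = w(e_i)$ and $f^{\circ}(v_i) = f(v_i) = f(te_i)$ yields $\prod_{i=0}^{k-1} w(e_i) f(te_i)$. Combined with the length factor $\lambda^{2k}/(2k)!$ and the prefactor $1/(e^{\lambda}-1)$, this is the first displayed identity. The matrix version then follows by writing each $e \in E$ as the pair $(t(e), s(e))$ and recalling the convention that $D_{ab}$ encodes the edge from $b$ to $a$. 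I do not anticipate a substantive obstacle; the mild difficulty is index bookkeeping, in particular aligning the source/target orientation between edge labels and matrix indices so that the boundary conditions $l = i_k, \ldots, i_1 = i$ match the chain of composable edges from $(i,j)$ to $(m,l)$ correctly.
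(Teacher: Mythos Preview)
Your proposal is correct and is exactly the argument the paper implicitly has in mind: the paper states Theorem~\ref{t2} without proof, treating it as a direct consequence of Proposition~\ref{v} applied to $G^{\circ}$, and your bipartite/parity observation together with the source--weight bookkeeping is precisely what is needed to carry that out.
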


\begin{figure}[t]
\centering
\includegraphics[width=16cm, height=7cm]{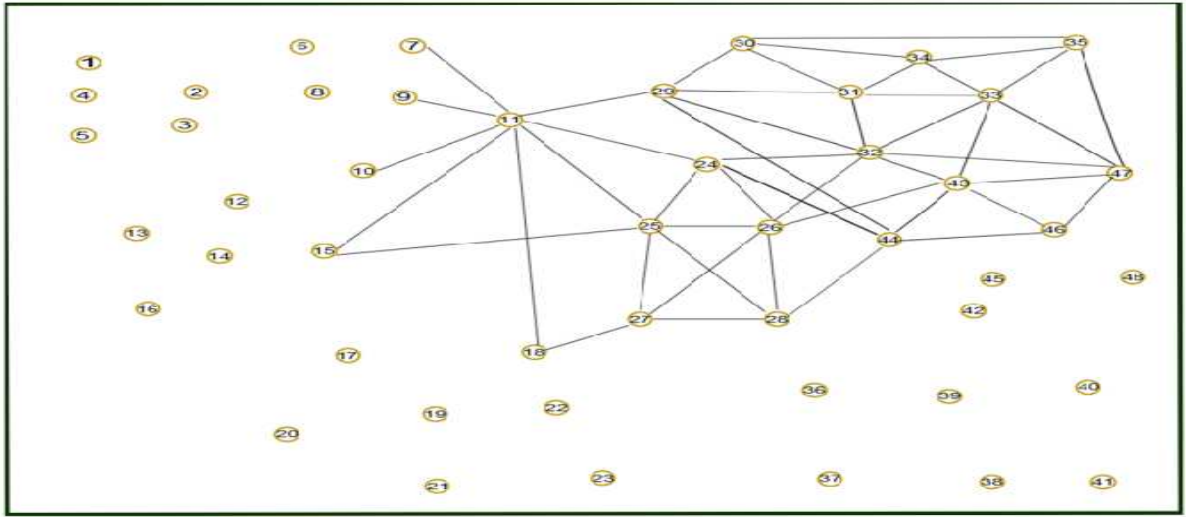}
\caption{Core of Network $T$ using Barycentric Division}
\label{z2}
\end{figure}

Considering the barycentric division network $ Y^{\circ}  $ and the ranking on nodes by indirect
dependence, indirect influence, and importance of its nodes we obtain the ranking on the links  of
network $ Y$
given by:

\begin{center}
\begin{tabular}{|c|c|}
  \hline
   & Ranking on Nodes of Network  $ Y^{\circ}$  \\  \hline
  Influence & $6> 4 > 3,1 > 56 > 24,64 > 63,13,43,31,41 > 5 > 2 > 35,15$ \\  \hline
  Dependence & $3> 5 > 1 > 35,31 > 4 > 56 > 15,13 > 43,41 > 6 > 64,63 > 24 > 2$ \\  \hline
  Importance & $3> 1 > 4 > 5 > 6 > 31 > 56 > 13 > 35 > 43,41 > 15 > 64 > 63 > 24 > 2$ \\  \hline

\end{tabular}
\end{center}

Figure \ref{cdb} displays on the right the clustering process for network $ Y$ based
on the barycentric construction using the ranking by importance on links, again we obtain
just one cluster component $ \{ 1,2,3,4, 5,6\} . $ Figure \ref{cpdb} displays the core-periphery finding
process for  network $Y $ based on the barycentric
construction, yielding the same result as with the dual construction: core $ \{ 1, 3\} $  and subsequent peripheral outer
rings $ \{ 5,6 \}, $    $\{ 4\}, \{ 2\}.$

\begin{figure}[t]
\centering
\includegraphics[width=16cm, height=7cm]{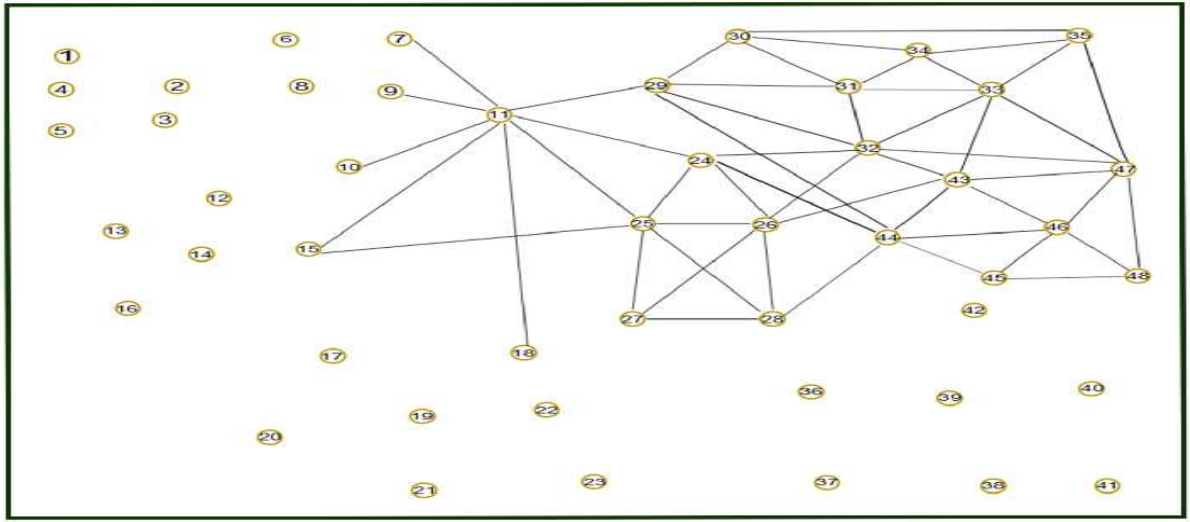}
\caption{Core of Network $T$ using Bridge Approach}
\label{z3}
\end{figure}

\section{Bridge Approach to Link Ranking}\label{dwwn6}

Our third method for ranking links on double weighted network is based on the idea
that the importance of a link is proportional to the importance of the nodes that it connects and to its functionality.
Assume that we have already computed the dependence and influence of nodes as in Section \ref{pwpwwny}.
The dependence  $\mathrm{E}(e), $ influence $ \mathrm{F}(e),  $ and importance $ \mathrm{I}(e) $ of link $e$
in a double weighted network are given, according to the bridge approach, by
$ \mathrm{E}(e) =  \mathrm{E}(se)f(se), \ $ $  \mathrm{F}(e) = w(e)\mathrm{F}(te), \ $
and $  \ \mathrm{I}(e) =  \mathrm{E}(se)f(se)  +  w(e)\mathrm{F}(te).$
The following result is  consequence of Proposition \ref{v}.
\begin{thm}

{\em The importance of link $ e $ in a double weighted directed network is given, according to the bridge approach, by
$$\mathrm{I}(e) =  \frac{f(se)}{e^{\lambda}-1}  \sum_{k=1}^{\infty} \sum_{\underset{te_k=se,
se_{i+1}=te_i}{e_k,...,e_1}} w(e_k)f(se_k)\cdots w(e_1)f(se_1)\frac{\lambda^k}{k!} \ \   + $$
$$\frac{w(e) }{e^{\lambda}-1}\sum_{k=1}^{\infty}
\sum_{\underset{se_{i+1}=te_i,  se_1=te }{e_k,...,e_1}} w(e_k)f(se_k)
\cdots w(e_1)f(se_1) \frac{\lambda^k}{k!}.$$
Equivalently, in matrix notation
$$\mathrm{I}_{ij}=\frac{f_i}{e^{\lambda}-1} \sum_{k=1}^{\infty} \sum_{i_k,...,i_1}
D_{ii_k}f_{i_k}\cdots D_{i_2i_1}f_{i_1}\frac{\lambda^k}{k!} \    + \
\frac{D_{ij}}{e^{\lambda}-1}\sum_{k=1}^{\infty} \sum_{i_k,...,i_1} D_{i_ki_{k-1}}f_{e_{k-1}}\cdots D_{i_1j}f_j\frac{\lambda^k}{k!}.$$

}
\end{thm}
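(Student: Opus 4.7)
The plan is to treat the theorem as a direct unpacking of the bridge definition $\mathrm{I}(e) = \mathrm{E}(se)f(se) + w(e)\mathrm{F}(te)$ via Proposition \ref{v}, with the only real work being careful bookkeeping of the path-index constraints. So I would handle the two summands separately and then add them.

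First I would expand $\mathrm{E}(se) = \sum_{v \in V} T_{se,v}(\lambda)$ by applying Proposition \ref{v} with $u = se$. The proposition gives an inner sum ranging over paths $(e_k,\ldots,e_1)$ satisfying $te_k = se$, $te_i = se_{i+1}$, and $se_1 = v$; summing over all $v \in V$ simply removes the constraint $se_1 = v$, leaving the two chain constraints $te_k=se$ and $se_{i+1}=te_i$. Multiplying by $f(se)$ pulls a factor of $f(se)$ out in front and reproduces the first summand in the theorem verbatim.

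Next I would treat $\mathrm{F}(te) = \sum_{u \in V} T_{u,te}(\lambda)$ symmetrically: apply Proposition \ref{v} with $v = te$, so the inner sum is over paths with $se_1 = te$, $se_{i+1} = te_i$, and $te_k = u$; summing over $u$ kills the constraint $te_k = u$, leaving the chain constraints plus $se_1 = te$. Multiplying by $w(e)$ yields the second summand. Adding the two expressions gives the claimed formula for $\mathrm{I}(e)$, and the matrix version follows by the same dictionary that converts Proposition \ref{v}'s link-path sums into sums over node-index tuples $(i_k,\ldots,i_1)$ with $i_k$ the target index of $e$ in the first sum and $i_1$ the source index of $e$ in the second.

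There is essentially no conceptual obstacle here, since the assertion is just the bridge definition spelled out using the formula of Proposition \ref{v}. The only place where one must be careful is matching the path-constraint notation in the two summands: in $\mathrm{E}(se)f(se)$ the free endpoint is the source $se_1$, while in $w(e)\mathrm{F}(te)$ the free endpoint is the target $te_k$. Once those constraints are correctly removed by summation, the indexing of the displayed sums (including the placement of $f(se)$ versus $w(e)$ outside the summation) follows immediately.
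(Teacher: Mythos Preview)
Your proposal is correct and matches the paper's own treatment: the paper states that this result ``is a consequence of Proposition~\ref{v}'' and gives no further argument, so your unpacking of $\mathrm{I}(e)=\mathrm{E}(se)f(se)+w(e)\mathrm{F}(te)$ via Proposition~\ref{v}, with the summation over the free endpoint removing the corresponding path constraint in each summand, is exactly the intended derivation.
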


\section{Symmetric and Generic Network Examples}\label{s7}

As a rule one expects the dual, barycentric, and bridge methods to yield different results, as their
explicit formulae given above indicate, nevertheless in some cases they do agree. In this section we consider a highly
symmetric network $S$, shown on the left of Figure \ref{jjj1}, coming with intuitively clear
clustering and core-periphery structures. Network $S$, with $12$ nodes  and $15$ links,  consists of three directed
$4$-cycles connected through $3$ nodes forming an additional directed cycle. The barycentric
network $ S^{\circ},$  with $25 $ nodes and $30$ links,  and the dual network $ S^{\ast},$ with
$15$ nodes and $21$ links, are shown in the center and right hand
side of Figure  \ref{jjj1}. Although $ S^{\circ}$ and $S^{\star}$ are different networks,  our three methods yield the
same clustering and core-component structure, and indeed
the outputs are what one may naively expect: the clusters are the  directed cycles
$\{1,2,3,4 \},\ \{5,6,7,8\},\ \{9,10,11,12 \}$, the core are the nodes \{4,6,9\} connecting these cycles,
with a second layer formed by the nodes  adjacent to the core $\{1,3,5,7,10,12 \}$, and
the periphery being the nodes $\{2,8,11 \}$ attached to the second layer.

Finally, we test our methods on a more sophisticated network $T$ with $48$ nodes and $242$ links  shown in
Figure \ref{t}, which we borrowed from \cite{s}.
Applying our three clustering methods to $T$ until the obtained clusters are trees or cycles, we obtain
the networks displayed in Figures \ref{ct1}, \ref{ct2}, \ref{ct3}, which although not identical are actually
pretty similar. Note however that the barycentric method yields a pretty large cluster with $19$ nodes.
The number of steps required to reach such clusterings  with our three methods are also quite similar.
The core of network $T$ according to our three methods are shown in Figures \ref{z1}, \ref{z2},
\ref{z3}, respectively. Again the outputs are pretty consistent, and were obtained in roughly the same number of steps.

\section{Conclusion}

We have shown that the problems of hierarchization, clustering, and core-periphery finding
are intimately related. Indeed any  hierarchization method, together with a suitable choice of
network constructions, leads to clustering and core-periphery finding methods.
We considered three construction, namely, the dual, barycentric subdivision, and bridge constructions
on double weighted networks. Applying this philosophy together with the PWP method for ranking nodes
we obtain new clustering and  core-periphery methods, which we computed in three toy models. We also
applied the PWP map to obtain a one-parameter deformation
of the modularity function. Our methods can be readily be modified to use other definitions
for the matrix of indirect influences,  instead of the PWP map.

\

\

\noindent ragadiaz@gmail.com\\
\noindent Universidad Nacional de Colombia - Sede Medell\'in, Facultad de Ciencias,\\
Escuela de Matem\'aticas, Medell\'in, Colombia\\

\noindent jorgerev90@gmail.com, \ \ angelikius90@gmail.com  \\
\noindent Departamento de Matem\'aticas, Universidad Sergio Arboleda, \ Bogot\'a, \ Colombia\\

\end{document}